\newcommand{\eg}{\mbox{e.\,g.,}\xspace}
\newcommand{\etal}{\mbox{et\,al.}\xspace}
\newcommand{\ie}{\mbox{i.\,e.,}\xspace}
\newcommand{\cf}{\mbox{cf.}\xspace}
\newcommand{\ioco}{\textbf{ioco}\xspace}
\newcommand{\mioco}{\ensuremath{\miocomia}\xspace}
\newcommand{\mior}{\textbf{mior}\xspace}
\newcommand{\miorpre}{\textbf{mior}$_\leq$\xspace}
\newcommand{\mustarrow}{\longrightarrow_\Box}
\newcommand{\mayarrow}{\longrightarrow_\Diamond}
\newcommand{\oversetmust}[1]{\overset{#1}{\longrightarrow}_\Box}
\newcommand{\Oversetmust}[1]{\overset{#1}{\longrightarrow}_\Box}
\newcommand{\oversetmay}[1]{\overset{#1}{\longrightarrow}_\Diamond}
\newcommand{\Oversetmay}[1]{\overset{#1}{\longrightarrow}_\Diamond}
\newcommand{\transrel}[1]{\overset{#1}{\longrightarrow}}
\newtheorem{definition}{Definition}
\newtheorem{proposition}{Proposition}
\newtheorem{theorem}{Theorem}
\newtheorem{lemma}{Lemma}
\renewenvironment{proof}{\noindent{\itshape Proof.}}{\hfill\qed\smallskip

}
\newcommand\mayinit{\ensuremath{\textit{init}_\Diamond}\xspace}
\newcommand\mustinit{\ensuremath{\textit{init}_\Box}\xspace}
\newcommand\mayafter{\ensuremath{\,\after_\Diamond\,}\xspace}
\newcommand\mustafter{\ensuremath{\,\after_\Box\,}\xspace}
\newcommand\mayout{\ensuremath{\textit{Out}_\Diamond}\xspace}
\newcommand\mustout{\ensuremath{\textit{Out}_\Box}\xspace}
\newcommand\maystraces{\ensuremath{\textit{Straces}_\Diamond}\xspace}
\newcommand\muststraces{\ensuremath{\textit{Straces}_\Box}\xspace}
\newcommand\straces{\ensuremath{\textit{Straces}}\xspace}
\DeclareMathOperator{\after}{\,\mathbf{after}\xspace}
\newcommand\out{\ensuremath{\textit{Out}}\xspace}
\DeclareMathOperator{\miocomia}{\mathbf{mioco}_{\mathsf{MIA}}}
\DeclareMathOperator{\mathioco}{\mathbf{ioco}}
\DeclareMathOperator{\mathmior}{\mathbf{mior}}
\newcommand\miaref{\ensuremath{\,\sqsubseteq_{\mathsf{MIA}}\,}\xspace}
\newcommand\confref{\ensuremath{\,\sqsubseteq_{\mathsf{var}}\,}\xspace}
\newcommand\famlts[1]{\ensuremath{#1_{\mathit{fam}}}\xspace}
\tikzset{
  every node/.style={},
  state/.style={circle,draw,fill=black,inner sep=.2em},
  emptystate/.style={circle,inner sep=.2em},
  labeledstate/.style={circle,draw,inner sep=.2em}, 
  may/.style={->,>=stealth',dashed},
  must/.style={->,>=stealth'},
  emptyarrow/.style={->,>=stealth',transparent}
}
\title{Towards an I/O Conformance Testing Theory for Software Product Lines based on Modal Interface Automata}
\author{Lars Luthmann$^{*}$ \qquad\qquad Stephan Mennicke\thanks{This work was partially supported by the DFG (German Research Foundation), grant GO-671/6-2.}
    \institute{Institute for Programming and Reactive Systems\\
    TU Braunschweig, Germany}
    \email{l.luthmann@tu-bs.de \quad\qquad mennicke@ips.cs.tu-bs.de}
    \and
    Malte Lochau\thanks{This work was partially supported by the DFG (German Research Foundation) under the Priority Programme SPP1593: Design For Future –-- Managed Software Evolution.}
    \institute{Realtime Systems Lab\\
    TU Darmstadt, Germany}
    \email{malte.lochau@es.tu-darmstadt.de}
}
\begin{document}
    \maketitle

%
\begin{abstract}
We present an adaptation of
input/output conformance (ioco)
testing principles to families
of similar implementation variants as
appearing in product line engineering.
Our proposed product line testing theory
relies on Modal Interface Automata (MIA)
as behavioral specification formalism.
MIA enrich I/O-labeled transition systems with
may/must modalities to distinguish
mandatory from optional behavior, thus
providing a semantic notion of intrinsic behavioral variability.
In particular, MIA constitute a restricted,
yet fully expressive subclass of I/O-labeled
modal transition systems, guaranteeing 
desirable refinement and compositionality properties.
The resulting modal-ioco relation
defined on MIA is preserved
under MIA refinement, which serves as 
variant derivation mechanism in our product line testing theory.
As a result, modal-ioco is proven correct in the 
sense that it coincides with traditional ioco to hold for 
every derivable implementation variant.
Based on this result, a family-based
product line conformance testing framework 
can be established.
\end{abstract}
%
\section{Introduction}\label{sec:introduction}
Modal transition systems (MTS) constitute an extension
to (labeled) transition systems (LTS) by enriching
the transition relation with a
\emph{may}/\emph{must} dichotomy~\cite{Larsen1988,larsen:modalspec}.
This way, behavioral system specifications based
on MTS leave open implementation freedom
by distinguishing \emph{mandatory} from \emph{optional}
behaviors, thus imposing a rigorous notion of (semantic)
\emph{refinement}~\cite{Alur1998}.
Considering Input/Output-labeled MTS in particular,
they provide a suitable
foundation for interface specifications
of component-based systems~\cite{Raclet2009}.
MTS incorporate a natural notion of interface \emph{compatibility}
and, thereupon, enjoy desirable compositionality
properties, being imperative for a comprehensive
\emph{interface theory}~\cite{raclet:modalinterfacetheory,Bauer2011}.

Based on the work of Fischbein et al.~in~\cite{Fischbein2006},
Larsen et al.~propose in~\cite{larsen:modalioautomata} to use modal specifications
as a basis for a \emph{behavioral variability theory}
for software product lines~\cite{Clements2001}.
A product line, therefore,
comprises a family of well-defined
\emph{implementation variants} derivable
from modal specifications using modal refinement, where
the validity of a variant is further restricted
due to its compatibility with other components and/or
a given environmental specification.
Based on this compact representation
of families of implementation variants,
a verification theory for product lines
has been developed in~\cite{Asirelli2011a}, combining
MTS with deontic logics to further restrict variable behaviors.
Those approaches allow for model-checking temporal properties
on entire families of implementation variants without explicitly
considering every particular variant, which is referred to as
\emph{family-based} product line analysis~\cite{Thum2014}.

However, besides those appealing family-based product line
verification approaches, the applicability of
modal specifications as a formal foundation
for a family-based product line testing theory
has not been intensively considered so far.
In particular, the input/output conformance
testing theory, initially introduced by Tretmans in~\cite{Tretmans1996},
is one of the most established formal frameworks to
reason about fundamental properties
of (model-based) functional testing approaches.
For this purpose, the \textbf{ioco} relation
imposes a notion of observational equivalence~\cite{Nicola1987}
between a test model specification, given
as an I/O-labeled transition system, and a (black box)
implementation under test, requiring the implementation
behaviors to conform to the specified behaviors.

To the best of our knowledge, there currently exist two approaches adapting I/O-conformance testing principles to product lines.
In Beohar and Mousavi~\cite{Beohar2014}, featured transition systems (FTS), initially proposed by Classen et al.~\cite{Classen2010}, are equipped with
I/O labels to enrich the \textbf{ioco} relation with explicit feature constraints.
This way, a family-based I/O-conformance testing framework
can be established, based on constraint-solving
capabilities as used in~\cite{Classen2011} for product line model checking.
In contrast, the approach proposed
in~\cite{lochau:modelbasedtesting}, called \textbf{mioco}, adapts the
key concepts of \textbf{ioco} to modal product line
specifications where I/O-labeled MTS (IOMTS) are used as specifications
of product lines under test.
A corresponding family-based I/O-conformance testing theory can
be built upon the notions of modal refinement and composition~\cite{larsen:modalioautomata}.
In this paper, we present an improved
elaboration of this initial approach to serve as a sound basis for
family-based product line conformance testing.
In particular, we make the
following contributions.
\begin{itemize}
	\item We consider a novel class of I/O-labeled modal
	transition systems, \ie Modal Interface Automata (MIA)~\cite{luettgen:modalinterfaceautomata},
	instead of IOMTS.
	MIAs slightly restrict IOMTS to guarantee desirable
	refinement and compositionality properties.
	\item We define the conformance relation $\miocomia$
	to relate product line implementations to product line specifications
	both given as MIAs.
	Thereupon, we clarify the assumptions to hold for specification
	and implementation in the spirit of classical \textbf{ioco}, \eg
	concerning input-enabledness and different concepts for input completions.
	One major challenge is to guarantee a proper treatment
	of the two kinds of implementation freedom apparent in product line
	specifications, namely \emph{variable} and \emph{unspecified} behaviors.
	\item In addition to the basic result in~\cite{lochau:modelbasedtesting} ensuring preservation of $\textbf{mioco}$ on IOMTS under refinement,
	we obtain strong results for our novel conformance relation $\miocomia$, concerning	soundness and completeness.
	Therefore, $\miocomia$ reflects the essence of family-based product line analysis by means of I/O-conformance testing~\cite{Thum2014}.
\end{itemize}
Here, we focus our considerations
to testing single components, and, therefore, also
omitting $\tau$ transitions within modal specifications.
However, the results obtained in this paper pave
the way to a compositional and family-based product line testing theory.

%
The remainder of this paper
is structured as follows.
Sect.~\ref{sec:preliminaries} provides
a brief repetition of
input/output conformance defined on I/O-labeled
transition systems, as well as
the foundations of modal transition systems.
In Sect.~\ref{sec:model}, we introduce MIA
as a new model for product line specifications
and describe variant derivation semantics in terms
of MIA refinement.
In Sect.~\ref{sec:testing_theory}, we propose
an adaptation of input/output conformance notions to
MIA and define approaches for achieving
input-enabledness via completions.
Our main result concerning the correctness
of modal-ioco on MIA are formulated and proven
in Sect.~\ref{sec:results}.
Sect.~\ref{sec:discussion} concludes with
an outlook on our ongoing and future research directions.

\section{Preliminaries}\label{sec:preliminaries}
We start with an overview on I/O-labeled transition systems and I/O conformance testing.
Furthermore, we present modal transition systems (MTS) laying the
foundation for modal interface automata.

Labeled transition systems (LTS) constitute a well-established model for discrete-state reactive system behaviors.
The behavior of an LTS is specified by means of a labeled transition relation $\longrightarrow\subseteq Q \times \textit{act} \times Q$ on a set of states $Q$ and an alphabet of actions $\textit{act}$.
To serve as a test model specification for input/output conformance testing, the subclass of \emph{I/O labeled transition systems} is considered, dividing the set $\textit{act}$
into disjoint subsets of controllable \emph{input}
actions $I$ and observable \emph{output} actions $O$.
In addition, \emph{internal} actions are usually summarized
under the special symbol $\tau\not\in(I\cup O)$.
However, we do not consider $\tau$ transitions in this paper.
\begin{definition}[I/O Labeled Transition System]\label{def:lts}
	An \emph{I/O labeled transition system} (IOLTS) is a tuple $(Q,I,O,\longrightarrow)$, where
	\begin{itemize}
		\item $Q$ is a countable set of states,
		\item $I$ and $O$ are disjoint sets of input actions and output actions, respectively, and
		\item $\longrightarrow\subseteq Q \times (I\cup O) \times Q$ is a labeled transition relation.
	\end{itemize}
\end{definition}
Note that (IO)LTS usually do not comprise a predefined initial state,
as it is either identified with its entire set of states $Q$, or some state $q\in Q$ denoting the initial state.
By $q\overset{a}{\longrightarrow}q'$ we mean that $(q,a,q')\in\longrightarrow$ holds, and
we write $q\overset{a}{\longrightarrow}$ as a short hand for $\exists q'\in Q : q\overset{a}{\longrightarrow} q'$.
We further denote a {\em path} $q_0\overset{a_1}{\longrightarrow}q_1\overset{a_2}{\longrightarrow}\cdots\overset{a_{n-1}}{\longrightarrow}q_{n-1}\overset{a_n}{\longrightarrow}q_n$ by $q_0\overset{\sigma}{\longrightarrow}q_n$, where $\sigma = a_1 a_2 \ldots a_n \in (I\cup O)^{*}$ is called a {\em trace}.

In the input/output conformance relation (\cf Sect.~\ref{subsec:ioco}), an implementation, represented as an I/O-labeled transition system, is assumed to be \emph{input-enabled}, \ie to never reject any inputs.
This yields the subclass of \emph{I/O transition systems}.
\begin{definition}[I/O Transition System]\label{def:iots}
	A state $q\in Q$ of an IOLTS $Q$ is \emph{input-enabled} iff for all $i\in I$, there exists a state $q'\in Q$ such that $q\transrel{i} q'$.
	$Q$ is an {\em I/O Transition System} (IOTS) iff all $q\in Q$ are input-enabled.
\end{definition}
In deviation from Tretmans~\cite{Tretmans1996}, we employ strong input-enabledness, as we do not consider internal behavior.
Figure~\ref{fig:lts_example} shows three sample LTS, modeling different
variants of a vending machine.
All of these vending machines have in common that they accept money in the initial state (the topmost state) and are capable of dispensing tea.
However, some of them may also dispense coffee and notify the user about errors.
Transitions are labeled with either input labels (prefix $?$), or output labels (prefix $!$).
%
The LTS in Figure~\ref{fig:lts_example_b} accepts 1\euro{} or 2\euro{} coins from customers.
After inserting 2\euro{}, change is returned and the customer is allowed to
choose coffee or tea, or to refill cups.
Next, the vending machine dispenses the selected beverage.
The LTS in Figure~\ref{fig:lts_example_a} is an IOTS, as every
state accepts all possible inputs, \ie \emph{1\euro}, \emph{2\euro}, \emph{cups}, and \emph{tea}.
Label \emph{I} denotes that a transition exists
for each input symbol, unless a state already accepts an input.
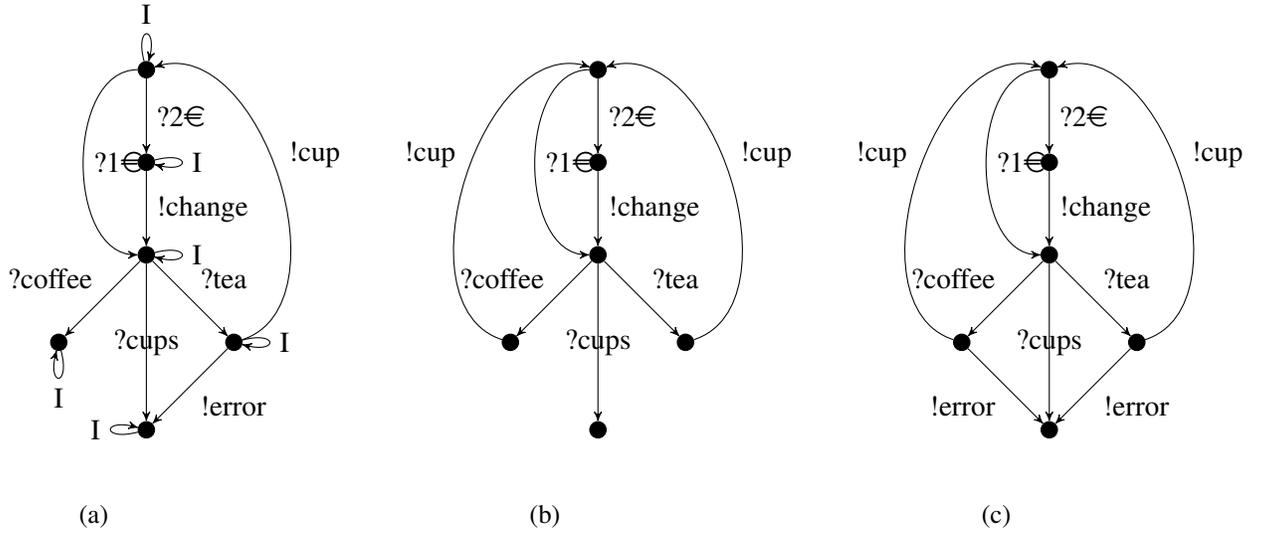
\begin{figure}
	\begin{subfigure}[b]{0.25\textwidth}
		\centering

\begin{tikzpicture}

\node[state] (q0) {};
\node[state, below=of q0] (q1) {};
\node[state, below=of q1] (q2) {};
\node[state, below left=of q2] (q3) {};
\node[state, below right=of q2] (q4) {};
\node[state, below right=of q3] (q5) {};

\draw[must] (q0) to node[auto] {?2\euro} (q1);
\draw[must,bend right=90] (q0) to node[auto] {?1\euro} (q2);
\draw[must] (q1) to node[auto] {!change} (q2);
\draw[must] (q2) to node[auto,swap] {?coffee} (q3);
\draw[must] (q2) to node[auto] {?tea} (q4);
\draw[must] (q2) to node[pos=.5] {?cups} (q5);
\draw[emptyarrow] (q3) to node[auto,swap] {\phantom{!error}} (q5);
\draw[must] (q4) to node[auto] {!error} (q5);
\draw[emptyarrow,bend left=90] (q3) to node[auto] {} (q0);
\draw[must,bend right=90] (q4) to node[auto,swap] {!cup} (q0);
\draw[must,loop above] (q0) to node[auto] {I} (q0);
\draw[must,loop right] (q1) to node[auto] {I} (q1);
\draw[must,loop right] (q2) to node[auto] {I} (q2);
\draw[must,loop right] (q4) to node[auto] {I} (q4);
\draw[must,loop left] (q5) to node[auto] {I} (q5);
\draw[must,loop below] (q3) to node[auto] {I} (q3);

\end{tikzpicture}
		\subcaption{}
		\label{fig:lts_example_a}
	\end{subfigure}
	\hfill
	\begin{subfigure}[b]{0.25\textwidth}
		\centering

\begin{tikzpicture}

\node[state] (q0) {};
\node[state, below=of q0] (q1) {};
\node[state, below=of q1] (q2) {};
\node[state, below left=of q2] (q3) {};
\node[state, below right=of q2] (q4) {};
\node[state, below right=of q3] (q5) {};

\draw[must] (q0) to node[auto] {?2\euro} (q1);
\draw[must,bend right=90] (q0) to node[auto] {?1\euro} (q2);
\draw[must] (q1) to node[auto] {!change} (q2);
\draw[must] (q2) to node[auto,swap] {?coffee} (q3);
\draw[must] (q2) to node[auto] {?tea} (q4);
\draw[must] (q2) to node[pos=.5] {?cups} (q5);
\draw[emptyarrow] (q3) to node[auto,swap] {\phantom{!error}} (q5);
\draw[emptyarrow] (q4) to node[auto] {\phantom{!error}} (q5);
\draw[must,bend left=90] (q3) to node[auto] {!cup} (q0);
\draw[must,bend right=90] (q4) to node[auto,swap] {!cup} (q0);
\draw[emptyarrow,loop above] (q0) to node[auto] {I} (q0);
\draw[emptyarrow,loop right] (q1) to node[auto] {I} (q1);
\draw[emptyarrow,loop right] (q2) to node[auto] {I} (q2);
\draw[emptyarrow,loop right] (q4) to node[auto] {I} (q4);
\draw[emptyarrow,loop left] (q5) to node[auto] {I} (q5);

\end{tikzpicture}
		\subcaption{}
		\label{fig:lts_example_b}
	\end{subfigure}
	\hfill
	\begin{subfigure}[b]{0.25\textwidth}
		\centering

\begin{tikzpicture}

\node[state] (q0) {};
\node[state, below=of q0] (q1) {};
\node[state, below=of q1] (q2) {};
\node[state, below left=of q2] (q3) {};
\node[state, below right=of q2] (q4) {};
\node[state, below right=of q3] (q5) {};

\draw[must] (q0) to node[auto] {?2\euro} (q1);
\draw[must,bend right=90] (q0) to node[auto] {?1\euro} (q2);
\draw[must] (q1) to node[auto] {!change} (q2);
\draw[must] (q2) to node[auto,swap] {?coffee} (q3);
\draw[must] (q2) to node[auto] {?tea} (q4);
\draw[must] (q2) to node[pos=.5] {?cups} (q5);
\draw[must] (q3) to node[auto,swap] {!error} (q5);
\draw[must] (q4) to node[auto] {!error} (q5);
\draw[must,bend left=90] (q3) to node[auto] {!cup} (q0);
\draw[must,bend right=90] (q4) to node[auto,swap] {!cup} (q0);
\draw[emptyarrow,loop above] (q0) to node[auto] {I} (q0);
\draw[emptyarrow,loop right] (q1) to node[auto] {I} (q1);
\draw[emptyarrow,loop right] (q2) to node[auto] {I} (q2);
\draw[emptyarrow,loop right] (q4) to node[auto] {I} (q4);
\draw[emptyarrow,loop left] (q5) to node[auto] {I} (q5);

\end{tikzpicture}
		\subcaption{}
		\label{fig:lts_example_c}
	\end{subfigure}
	\caption{Sample LTS of a simple vending machine, adapted from~\cite{lochau:modelbasedtesting}.}
	\label{fig:lts_example}
\end{figure}
\subsection{Input/Output Conformance}\label{subsec:ioco}
An implementation $i$, given as an IOTS, I/O-conforms
to a specification $s$, given as an IOLTS, if all observable output symbols of $i$ after any
possible input sequence $\sigma$ of $s$ are permitted by $s$.
That means that a system specification states the allowed output behavior.
For this to hold, the set $\textit{Out}(P)$ of output actions enabled in any possible state $p\in P$ of $i$ reachable via a sequence $\sigma$, denoted by $P=i\after\sigma$, must be included in the corresponding set $\textit{Out}(Q)$ with $Q=s\after\sigma$.
To rule out trivial implementations never showing any outputs, the concept of \emph{quiescence} is introduced by means of an observable action $\delta$ to explicitly permit the absence (suspension) of any output in a state.
The definitions of this section follow~\cite{Tretmans1996}.
\begin{definition}
	Let $Q$ be an IOLTS, $p\in Q$, $P\subseteq Q$, and $\sigma\in(I\cup O\cup\{\delta\})^*$.
	\begin{itemize}
		\item $init(p):=\{\mu\in(I\cup O)\mid p\overset{\mu}{\longrightarrow}\}$,
		\item $p$ is \emph{quiescent}, denoted by $\delta(p)$, iff $init(p)\subseteq I$,
		\item $p\after\sigma:=\{q\in Q\mid p\overset{\sigma}{\longrightarrow}q\}$,
		\item $Out(P):=\{\mu\in O\mid \exists p\in P:p\overset{\mu}{\longrightarrow}\}\cup\{\delta\mid \exists p\in P:\delta(p)\}$, and
		\item $\straces(p):=\{\sigma\in(I\cup O\cup\{\delta\})^*\mid p\overset{\sigma}{\longrightarrow}\}$, where $q\overset{\delta}{\longrightarrow}q$ iff $\delta(q)$.
	\end{itemize}
\end{definition}
I/O conformance requires any reaction of an implementation $i$ to every possible environmental behavior $\sigma$ to be checked against those of its specification $s$, even if no proper reaction for $\sigma$ is actually specified by $s$.
Hence, conformance testing is usually limited to positive testing, \ie only considering behaviors being explicitly specified in $s$, \ie contained in the {\em suspension traces} ($\straces(s)$) of $s$.
\begin{definition}[Input/Output Conformance]\label{def:ioco}
	Let $s$ be an IOLTS and $i$ an IOTS with the same sets of inputs and outputs.
	$i\mathioco s:\Leftrightarrow\forall\sigma\in Straces(s):Out(i\after\sigma)\subseteq Out(s\after\sigma)$.
\end{definition}
Assuming the IOLTS of Figure~\ref{fig:lts_example_c} to be a specification $s$ and the IOTS of Figure~\ref{fig:lts_example_a} to be an implementation $i$, then $i\mathioco s$ holds, as $i$ does not show any unspecified output behavior.
However, considering the IOLTS of Figure~\ref{fig:lts_example_b} as a specification $s$ for $i$, then $i\mathioco s$ does not hold as $i$ exhibits output behavior \emph{error}, violating conformance of $i$ to $s$.

The \ioco relation permits implementation freedom as only one specified output behavior must be implemented.
In addition, if there are unspecified inputs for state $q$ in the specification $s$, then an implementation may react with arbitrary outputs to those unspecified inputs,
as those behaviors do not occur in the suspension traces of $s$ and are, therefore, never tested.
However, for product lines, we further need the possibility to (1) explicitly express variability within specifications and, therefore, to (2) distinguish mandatory from optional behavior, which leads us to Modal Transition Systems (MTS).
\subsection{Modal Transition Systems}\label{subsec:mts}
%
To specify behavioral variability of product lines, we use Modal Transition Systems (MTS) according to Larsen~\cite{larsen:modalspec,larsen:modalioautomata} as a basis.
MTS are based on LTS but distinguish between so called must and may transitions, specifying mandatory behavior as well as allowed behavior, respectively.
By definition any must transition is a may transition as any mandatory behavior must also be allowed.
Therefore, only may transitions not underlying must transitions denote optional behavior.
Accordingly, we call may transitions that are not must transitions \emph{optional} and must transitions \emph{mandatory}.
Additionally, absent transitions denote forbidden behavior.
\begin{definition}[Modal Transition System]\label{def:mts}
	A tuple $Q=(Q,A,\mustarrow,\mayarrow)$ is a \emph{Modal Transition System} (MTS) iff
	\begin{itemize}
		\item $Q$ is a finite set of states,
		\item $A$ is a set of actions,
		\item $\mustarrow\subseteq Q\times A\times Q$ is the labeled \emph{must transition} relation,
		\item $\mayarrow\subseteq Q\times A\times Q$ is the labeled \emph{may transition} relation, and
	\end{itemize}
	$Q$ is \emph{syntactically consistent}, \ie $q\oversetmust{a}q'$ implies $q\oversetmay{a}q'$.
\end{definition}
Note that, in our setting, we assign $I\cup O$ to $A$ which defines {\em I/O-labeled MTS}.
MTS allow us to superimpose several systems into one larger system, from which the original systems are derivable via modal refinement.
Therefore, modal {\em refinement} preserves mandatory and forbidden behavior, whereas optional behavior may turn into either mandatory or forbidden behavior.
Figure~\ref{fig:mts_example} shows a sample MTS.
Therein, solid edges depict mandatory behavior and dashed edges depict optional behavior.
The MTS in Figure~\ref{fig:mts_example} combines all IOLTS from Figure~\ref{fig:lts_example} into one.
This is achieved by making the behaviors being common to all IOLTS variants mandatory behaviors, whereas the variable behaviors become optional in the MTS.
%
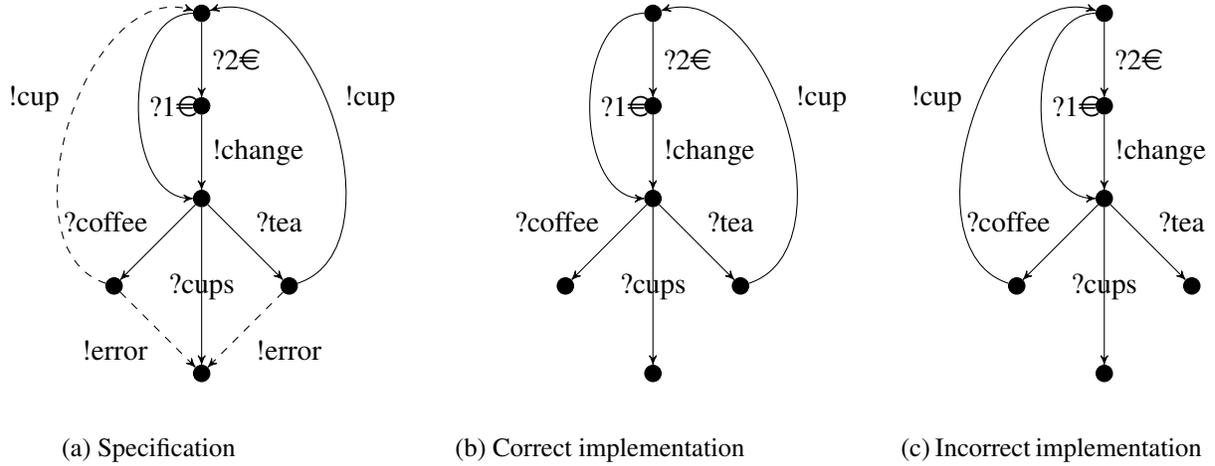
\begin{figure}
	\begin{subfigure}[b]{0.25\textwidth}
		\centering

\begin{tikzpicture}

\node[state] (q0) {};
\node[state, below=of q0] (q1) {};
\node[state, below=of q1] (q2) {};
\node[state, below left=of q2] (q3) {};
\node[state, below right=of q2] (q4) {};
\node[state, below right=of q3] (q5) {};

\draw[must] (q0) to node[auto] {?2\euro} (q1);
\draw[must,bend right=90] (q0) to node[auto] {?1\euro} (q2);
\draw[must] (q1) to node[auto] {!change} (q2);
\draw[must] (q2) to node[auto,swap] {?coffee} (q3);
\draw[must] (q2) to node[auto] {?tea} (q4);
\draw[must] (q2) to node[pos=.5] {?cups} (q5);
\draw[may] (q3) to node[auto,swap] {!error} (q5);
\draw[may] (q4) to node[auto] {!error} (q5);
\draw[may,bend left=90] (q3) to node[auto] {!cup} (q0);
\draw[must,bend right=90] (q4) to node[auto,swap] {!cup} (q0);

\end{tikzpicture}
		\subcaption{Specification}
		\label{fig:mts_example}
	\end{subfigure}
	\hfill
	\begin{subfigure}[b]{0.25\textwidth}
		\centering

\begin{tikzpicture}

\node[state] (q0) {};
\node[state, below=of q0] (q1) {};
\node[state, below=of q1] (q2) {};
\node[state, below left=of q2] (q3) {};
\node[state, below right=of q2] (q4) {};
\node[state, below right=of q3] (q5) {};

\draw[must] (q0) to node[auto] {?2\euro} (q1);
\draw[must,bend right=90] (q0) to node[auto] {?1\euro} (q2);
\draw[must] (q1) to node[auto] {!change} (q2);
\draw[must] (q2) to node[auto,swap] {?coffee} (q3);
\draw[must] (q2) to node[auto] {?tea} (q4);
\draw[must] (q2) to node[pos=.5] {?cups} (q5);
\draw[emptyarrow] (q3) to node[auto,swap] {\phantom{!error}} (q5);
\draw[emptyarrow] (q4) to node[auto] {\phantom{!error}} (q5);
\draw[emptyarrow,bend left=90] (q3) to node[auto] {\phantom{!cup}} (q0);
\draw[must,bend right=90] (q4) to node[auto,swap] {!cup} (q0);

\end{tikzpicture}
		\subcaption{Correct implementation}
		\label{fig:mioco_example_correct}
	\end{subfigure}
	\hfill
	\begin{subfigure}[b]{0.25\textwidth}
		\centering

\begin{tikzpicture}

\node[state] (q0) {};
\node[state, below=of q0] (q1) {};
\node[state, below=of q1] (q2) {};
\node[state, below left=of q2] (q3) {};
\node[state, below right=of q2] (q4) {};
\node[state, below right=of q3] (q5) {};

\draw[must] (q0) to node[auto] {?2\euro} (q1);
\draw[must,bend right=90] (q0) to node[auto] {?1\euro} (q2);
\draw[must] (q1) to node[auto] {!change} (q2);
\draw[must] (q2) to node[auto,swap] {?coffee} (q3);
\draw[must] (q2) to node[auto] {?tea} (q4);
\draw[must] (q2) to node[pos=.5] {?cups} (q5);
\draw[emptyarrow] (q3) to node[auto,swap] {\phantom{!error}} (q5);
\draw[emptyarrow] (q4) to node[auto] {\phantom{!error}} (q5);
\draw[must,bend left=90] (q3) to node[auto] {!cup} (q0);
\draw[emptyarrow,bend right=90] (q4) to node[auto,swap] {\phantom{!cup}} (q0);

\end{tikzpicture}
		\subcaption{Incorrect implementation}
		\label{fig:mioco_example_incorrect}
	\end{subfigure}
	\caption{Figure~\ref{fig:mts_example} shows an MTS combining all systems from Figure~\ref{fig:lts_example}. Figures~\ref{fig:mioco_example_correct} and~\ref{fig:mioco_example_incorrect} show a correct and an incorrent implementation regarding \mioco(see Section~\ref{sec:testing_theory}).}
	\label{fig:mioco_example}
\end{figure}
Larsen \etal~\cite{larsen:modalioautomata} also defined input-enabledness for MTS and an according input-enabled MTS version to be called \emph{I/O modal transition systems} (IOMTS).
However, there are two flavors of input-enabledness: \emph{must-input-enabledness} and \emph{may-input-enabledness}, both being defined as canonical extensions of Def.~\ref{def:iots}.
A may-input-enabled MTS is called {\em I/O-labeled MTS} (IOMTS).
%
%
For an overview on IOMTS and the according I/O-conformance relation, we refer to~\cite{lochau:modelbasedtesting}.
Another option for adding modalities to system specifications are {\em Modal Interface Automata}, being a restricted subclass of IOMTS.

\section{Modal Interface Automata}\label{sec:model}
%
The model we employ in this paper is called {\em Modal Interface Automata} (MIA) which are basically input-deterministic I/O-labeled MTS.
Input-determinism is a desirable property in model-based testing, as it makes testing procedures manageable by ensuring that some inputs are not infinitely often neglected during test scenarios, as imposed by non-deterministic inputs.
Furthermore, specified inputs are always mandatory, but unspecified inputs are implicitly allowed.
This restriction yields refinement and composition properties beneficial for both modeling product line specifications and implementations with behavioral variability, as well as modal I/O-conformance testing as described in Sect.~\ref{sec:testing_theory}.
The MTS depicted in Figure~\ref{fig:mioco_example} are in fact MIAs, as they exhibit input-determinism and every input transition is mandatory.
\begin{definition}[Modal Interface Automaton]\label{def:mia}
  A tuple $Q=(Q,I,O,\mustarrow,\mayarrow)$ is a \emph{Modal Interface Automaton} (MIA), where $(Q,I\cup O, \mustarrow,\mayarrow)$ is an MTS with disjoint alphabets \emph{I}, \emph{O} and for all $i\in I$:
  \begin{itemize}
    \item $q\oversetmust{i}q'$ and $q\oversetmust{i}q''$ implies $q'=q''$ (\ie we require input-determinism),
    \item $q\oversetmay{i}q'$ implies $q\oversetmust{i}q'$ (\ie all inputs are mandatory behavior).
  \end{itemize}
\end{definition}
In deviation to L\"uttgen and Vogler~\cite{luettgen:modalinterfaceautomata}, we do not employ \emph{disjunctive} MTS, as they are not needed for our purposes.
Furthermore, we limit our considerations to MIAs without internal behaviors, \ie $\tau$ transitions, which is no limitation, as all our results remain valid for MIAs with internal behavior.
For future work, we plan to investigate our testing theory under parallel composition for which a treatment of internal transitions is inevitable.
L\"uttgen and Vogler define an operator for parallel composition of MIA similar to interface automata~\cite{alfaro:interfacebased}.
They identify \emph{error states} arising from the composition of incompatible states, and remove them, as well as all states from which reaching some error state is no more preventable by environmental inputs.
This is similar to the operator by Larsen et al.~\cite{larsen:modalioautomata}, but, in contrast to IOMTS, composability of MIA is based on the compatible component semantics rather than syntactic criteria.

Each input transition of a MIA is, by definition, mandatory.
However, this does not limit the expressiveness of MIA compared to IOMTS, as input transitions are always implicitly allowed by {\em modal refinement}.
Modal refinement is a crucial notion in modal theories, as they constitute an implementation relation that preserves mandatory behaviors, but also leaves implementation freedom concerning optional and unspecified behaviors.
Intuitively, a MIA $p$ refines $q$ if (1) the optional output behavior of $p$ is simulated by $q$, and (2) all mandatory behavior of $q$ is simulated by $p$, thus imposing an {\em alternating simulation}~\cite{luettgen:modalinterfaceautomata}.
%
\begin{definition}[MIA Refinement]\label{def:mia_refinement}
  Let $P, Q$ be MIAs over $I$ and $O$.
  A relation $\mathcal{R}\subseteq P\times Q$ is a {\em MIA-refinement} iff for all $(p, q) \in\mathcal{R}$:
  \begin{enumerate}
    \item $q\oversetmust{a} q'$ where $a\in I\cup O$ implies $\exists p' : p \oversetmust{a} p'$ and $(p',q')\in\mathcal{R}$,
    \item $p\oversetmay{\alpha} p'$ where $\alpha\in O$ implies $\exists q' : q\oversetmay{\alpha} q'$ and $(p',q')\in\mathcal{R}$.
  \end{enumerate}
  If there is a MIA-refinement $\mathcal{R}$ such that $(p, q)\in\mathcal{R}$, then {\em $p$ MIA-refines $q$}, denoted by $p\miaref q$.
\end{definition}
The most desirable property for composition operators in
modal system theory is their preservation of modal refinement.
L\"uttgen and Vogler show this to hold for parallel MIA composition, and also for {\em conjunction} and {\em disjunction} of MIAs~\cite{luettgen:modalinterfaceautomata}.
In contrast to MTS, not all unspecified transitions of MIAs refer to forbidden behavior, but only those being outputs.
Input transitions are always implicitly allowed and, therefore, in Def.~\ref{def:mia_refinement}, only optional outputs of the refined MIA must be simulated by the unrefined one.

In this paper, we interpret MIAs with mandatory and optional behaviors as families of similar system variants.
In this regard, the refinement notion serves {\em variant derivation} such that in $p\miaref q$, $p$ represents a variant of $q$, where a (partially) refined $p$ may still contain optional behavior.
Furthermore, as unspecified inputs are implicitly allowed under MIA-refinement, $p$ may also contain additional behaviors, which is not feasible for a product line specification.
In order to obtain a sound variant derivation mechanism, we require it to finally yield an IOLTS, which does not incorporate optional behavior.
Hence, this IOLTS variant $p$ refines a MIA $q$, but is restricted to behaviors allowed in $q$.
\begin{definition}[Variant Derivation]\label{def:conf_refinement}
  Let $(P,I,O,\transrel{})$ be an IOLTS and $Q$ be a MIA over $I$ and $O$.
  $p\in P$ is a {\em variant of $q\in Q$}, denoted by $p\confref q$, iff (1) $p\miaref q$, \ie there is a MIA-refinement $\mathcal{R}$ between $(P, I, O, \transrel{}, \transrel{})$ and $Q$ such that $(p,q)\in\mathcal{R}$ and (2) $\transrel{}\subseteq\mayarrow$.
\end{definition}
Thus, a variant derivation is a special kind of MIA-refinement, ensuring that every optional transition of the specification is either removed from, or definitely included in the variant.
There is a close relationship between traces of variants $p\confref q$ and may-traces of $q$.
\begin{lemma}\label{lemma:lts_to_modal}
  Let $P$ be an IOLTS with $p\in P$ and $Q$ a MIA with $q\in Q$.
  If $p\confref q$, then for each $w\in (I\cup O)^{*}$ with $p\transrel{w}$ it holds that $q\oversetmay{w}$.
\end{lemma}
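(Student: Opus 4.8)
The plan is to prove, by induction on the length of $w$, a slightly stronger statement that threads the refinement relation through the construction. Let $\mathcal{R}$ be a MIA-refinement witnessing $p\confref q$, where $P$ is viewed as the MIA $(P,I,O,\transrel{},\transrel{})$. I would show: for every $(p_0,q_0)\in\mathcal{R}$ and every $w\in(I\cup O)^{*}$ with $p_0\transrel{w}p_1$, there exists $q_1$ with $q_0\oversetmay{w}q_1$ and $(p_1,q_1)\in\mathcal{R}$. The lemma then follows by instantiating $(p_0,q_0)=(p,q)$ and discarding the target state and the pairing. The base case $w=\varepsilon$ is immediate, since $p_0\transrel{\varepsilon}p_0$ is matched by $q_0\oversetmay{\varepsilon}q_0$ and $(p_0,q_0)\in\mathcal{R}$ holds by assumption.

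The inductive step reduces to a single-step claim: whenever $(p_0,q_0)\in\mathcal{R}$ and $p_0\transrel{a}p_0'$ for some $a\in I\cup O$, there is $q_0'$ with $q_0\oversetmay{a}q_0'$ and $(p_0',q_0')\in\mathcal{R}$. Writing $w=aw''$ and $p_0\transrel{a}p_0'\transrel{w''}p_1$, I would apply this claim to obtain $q_0\oversetmay{a}q_0'$ with $(p_0',q_0')\in\mathcal{R}$, then invoke the induction hypothesis on the shorter trace $w''$ and the pair $(p_0',q_0')$ to get $q_0'\oversetmay{w''}q_1$ with $(p_1,q_1)\in\mathcal{R}$; concatenating the may-steps yields $q_0\oversetmay{w}q_1$ as required. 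I would split the single-step claim according to whether $a$ is an output or an input. The output case is essentially immediate: since the may-relation of the MIA $P$ coincides with $\transrel{}$, the transition $p_0\transrel{a}p_0'$ is in particular a may-transition $p_0\oversetmay{a}p_0'$, so clause~(2) of Def.~\ref{def:mia_refinement} directly supplies $q_0\oversetmay{a}q_0'$ with $(p_0',q_0')\in\mathcal{R}$.

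I expect the input case to be the main obstacle, because clause~(2) of MIA-refinement simulates only optional outputs, while clause~(1) runs in the opposite direction, from the must-transitions of $q_0$ to those of $p_0$. Here I would invoke condition~(2) of variant derivation (Def.~\ref{def:conf_refinement}), $\transrel{}\subseteq\mayarrow$, which is precisely what rules out the extra, implicitly-allowed inputs that plain MIA-refinement would permit: it guarantees a may-transition $q_0\oversetmay{a}q_0'$ for the input $a$. Since every input of a MIA is mandatory, this upgrades to $q_0\oversetmust{a}q_0'$, and input-determinism of $Q$ makes $q_0'$ unique. To recover the pairing I would feed $q_0\oversetmust{a}q_0'$ back into clause~(1) of MIA-refinement, obtaining $p_0\oversetmust{a}p_0''$ with $(p_0'',q_0')\in\mathcal{R}$; input-determinism of the MIA $P$ (which holds, as $(P,I,O,\transrel{},\transrel{})$ must be a MIA for the refinement to be defined) forces $p_0''=p_0'$, hence $(p_0',q_0')\in\mathcal{R}$. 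The delicate point is exactly this interplay: using variant-derivation condition~(2) for existence of the may-input, the mandatory-input property to turn it into a must, and input-determinism on both sides to pin down and re-pair the targets so that the induction can proceed.
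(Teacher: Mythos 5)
Your output case is sound, and so is the re-pairing trick at the end of your input case (feeding $q_0\oversetmust{a}q_0'$ back into clause~(1) of Def.~\ref{def:mia_refinement} and using input-determinism of $P$ to force $p_0''=p_0'$). The genuine gap is the step you yourself flag as delicate: condition~(2) of Def.~\ref{def:conf_refinement} does \emph{not} give you a may-transition out of $q_0$. The inclusion $\transrel{}\subseteq\mayarrow$ relates sets of transition triples, so applied to $p_0\transrel{a}p_0'$ it only says that $(p_0,a,p_0')$ is itself a may-transition of $Q$, \ie a may-step out of $p_0$ regarded as a state of $Q$ --- not out of the $\mathcal{R}$-related state $q_0$. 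No refinement clause bridges this for inputs: clause~(2) covers outputs only, and clause~(1) runs in the opposite direction. Moreover, in the generality in which you set up the induction (with $p_0$ and $q_0$ possibly distinct states) the gap is unfixable, because the statement itself fails there: let $Q$ have states $\{q,r,r'\}$ over $I=\{a\}$ with the single (must and may) transition $r\oversetmust{a}r'$ and with $q$ transition-less, and let $P$ be the IOLTS with states $\{r,r'\}$ and the single transition $r\transrel{a}r'$. Then $\transrel{}\subseteq\mayarrow$, and $\mathcal{R}=\{(r,q)\}$ is a MIA-refinement because both clauses are vacuous ($q$ has no must-transitions, $r$ has no output transitions); hence $r\confref q$ by the letter of Def.~\ref{def:conf_refinement}, yet $r\transrel{a}$ while $q\not\oversetmay{a}$.

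What makes the lemma true is the reading under which condition~(2) is meaningful in the first place: a variant lives on (a subset of) the specification's own state space, obtained by pruning optional transitions, with each variant state paired with the state of $Q$ it stems from --- exactly the situation of the Family-LTS $\famlts{s}$ used in Theorem~\ref{theorem:soundness}, where $\famlts{Q}=Q$ and $\famlts{\transrel{}}=\mayarrow$ and every state refines itself. Under that identification the lemma is a one-liner needing none of your machinery: each step of a path $p\transrel{w}$ lies in $\transrel{}\subseteq\mayarrow$, so the very same sequence of states is a may-path of $Q$ starting in $q=p$, whence $q\oversetmay{w}$ --- no induction through $\mathcal{R}$, no input/output case split, no appeal to input-determinism. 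So your argument is either broken (under the general reading you actually work in, where the lemma is false) or superfluous (under the intended identification, where your problematic step holds because $p_0$ and $q_0$ are the same state). To repair the write-up, make that identification explicit and conclude directly, rather than trying to derive the may-step out of $q_0$ from the refinement relation.
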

\noindent Based on MIA refinement and MIA variant derivation mechanism, we define a modal version of \ioco.

\section{I/O Conformance Testing for MIAs}\label{sec:testing_theory}
In Sect.~\ref{subsec:ioco}, we have already discussed I/O conformance on IOLTS, allowing a certain degree of variability in implementations.
However, we propose to apply modal specifications to explicitly capture variability also within specifications, as being inherent to SPLs.
We now define the foundations of modal input/output conformance~\cite{lochau:modelbasedtesting} and introduce completion strategies for constructing input-enabled modal interface automata.
%
\subsection{Modal Input/Output Conformance}
Intuitively, a modal implementation conforms to a modal specification if it does not exceed the allowed outputs (may) and preserves all mandatory outputs (must).
We adapt the notion on I/O conformance to the MIA-framework, accordingly.
\begin{definition}\label{def:modal-init}
	Let $Q$ be a MIA over $I$ and $O$, $p\in Q$, $P\subseteq Q$, $\sigma\in(I\cup O \cup \{ \delta_\Box, \delta_\Diamond \})^*$, and $\gamma\in\{\Box,\Diamond\}$.
	\begin{itemize}
		\item $init_\gamma(p):=\{\mu\in(I\cup O)\mid p\overset{\mu}{\longrightarrow}_\gamma\}$,
		\item $p$ is \emph{may-quiescent}, denoted by $\delta_\Diamond(p)$, iff $\mustinit(p)\subseteq I$ and $p$ is \emph{must-quiescent}, denoted by $\delta_\Box(p)$, iff $\mayinit(p)\subseteq I$,
		\item $p\after_\gamma\sigma:=\{q\in Q \mid p\overset{\sigma}{\longrightarrow}_\gamma q\}$,
		\item $Out_\gamma(P):=\{\mu\in O\mid\exists p\in P:p\overset{\mu}{\longrightarrow}_\gamma\}\cup\{\delta_\gamma\mid\exists p\in P:\delta_\gamma(p)\}$, and
		\item $Straces_\gamma(p):=\{\sigma\in(I\cup O\cup\{\delta\})^*\mid p\Oversetmay{\sigma}\}$, where $q\overset{\delta}{\longrightarrow} q$ iff $\delta_\gamma(q)$.
	\end{itemize}
\end{definition}
Due to the property of syntactic consistency of MIAs, similar properties are induced for the notions of Def.~\ref{def:modal-init}.
For instance, $\mayinit(p)\subseteq I$ implies $\mustinit(p)\subseteq I$ as $a\in I$, $q\oversetmay{a} q'$ implies $q\oversetmust{a} q'$ and, therefore, must-quiescence of any state $p$ implies may-quiescence of $p$.
\begin{proposition}\label{lemma:may-must-preservation}
Let $Q$ be a MIA over $I$ and $O$, $p\in Q$, $P\subseteq Q$, and $\sigma\in (I\cup O\cup \{\delta_\Diamond, \delta_\Box \})^{*}$.
Then the following statements hold.
\begin{enumerate}
	\item $\mustinit(p)\subseteq\mayinit(p)$,
	\item if $\delta_\Box(p)$, then $\delta_\Diamond(p)$,
	\item $p \mustafter \sigma \subseteq p \mayafter \sigma$,
	\item $\mustout(P) \subseteq \mayout(P)$, and
	\item $\muststraces(p) \subseteq \maystraces(p)$.
\end{enumerate}
\end{proposition}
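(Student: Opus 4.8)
The plan is to prove the five statements in order, since each later item essentially reduces to the earlier ones together with the defining property of syntactic consistency of MIAs, namely that $q\oversetmust{a}q'$ implies $q\oversetmay{a}q'$. The single genuine ingredient is this consistency; everything else is unfolding definitions and monotonicity.

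First I would prove (1): if $\mu\in\mustinit(p)$, then $p\oversetmust{\mu}$, so $p\oversetmay{\mu}$ by syntactic consistency, giving $\mu\in\mayinit(p)$. Statement (2) follows as the contrapositive of the remark preceding the proposition: assuming $\delta_\Box(p)$ means $\mayinit(p)\subseteq I$; since $\mustinit(p)\subseteq\mayinit(p)$ by (1), we get $\mustinit(p)\subseteq I$, which is exactly $\delta_\Diamond(p)$. For (3) I would argue by induction on the length of $\sigma$: the base case $\sigma=\varepsilon$ is trivial since both sides equal $\{p\}$, and in the inductive step each single must-transition in a path $p\Oversetmust{\sigma}q$ is also a may-transition by consistency, so any $q\in p\mustafter\sigma$ lies in $p\mayafter\sigma$. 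One subtlety here is the treatment of the quiescence symbols $\delta_\Box,\delta_\Diamond$ appearing inside $\sigma$: by the convention in Def.~\ref{def:modal-init} these induce self-loops $q\transrel{\delta}q$ whenever the corresponding quiescence predicate holds, so a $\delta_\Box$-step is matched by a $\delta_\Diamond$-step precisely by statement (2). This is the step I expect to require the most care, because it forces one to interpret the mixed alphabet of ordinary actions and quiescence labels uniformly.

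For (4), take $\mu\in\mustout(P)$. If $\mu\in O$, there is $p\in P$ with $p\oversetmust{\mu}$, hence $p\oversetmay{\mu}$ by consistency, so $\mu\in\mayout(P)$; if $\mu=\delta_\Box$, there is $p\in P$ with $\delta_\Box(p)$, whence $\delta_\Diamond(p)$ by (2), so $\delta_\Diamond\in\mayout(P)$. Finally, for (5), if $\sigma\in\muststraces(p)$ then $p\Oversetmust{\sigma}$, and applying (3) (equivalently, replaying the path using that each must-step is a may-step) gives $p\Oversetmay{\sigma}$, so $\sigma\in\maystraces(p)$.

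Overall I expect the proof to be short and almost entirely a matter of bookkeeping; the only place where a genuine choice of technique arises is (3), where an explicit induction over path length is cleanest, and where the handling of the $\delta$-labelled self-loops must be made explicit so that (3) and (5) are on firm footing. Once (1)--(3) are in place, (4) and (5) are immediate corollaries.
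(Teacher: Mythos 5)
Your proof is correct and follows exactly the route the paper intends: the paper gives no explicit argument beyond the remark that the claims ``follow from Def.~\ref{def:mia} and Def.~\ref{def:modal-init},'' and your write-up is precisely that unfolding, with syntactic consistency ($q\oversetmust{a}q'$ implies $q\oversetmay{a}q'$) as the sole substantive ingredient. If anything, you are more careful than the paper, since you make explicit how the $\delta$-labelled self-loops are handled via item (2), a point the paper leaves implicit.
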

Proofs follow from Def.~\ref{def:mia} and Def.~\ref{def:modal-init}.
Due to the required input-determinism, MIAs may only show non-determinstic behavior due to conflicting output transitions.
Thus, when considering a trace $w$ of a state $q$, it holds that $|q\mayafter w| \geq 1$.
Considering MIA-refinement $p\miaref q$, each state in $p\mayafter w$ relates to some state in $q\mayafter w$.
Conversely, the same holds for states of $p\mustafter w$ and $q\mustafter w$.
\begin{lemma}\label{lemma:refinement-outstates}
	Let $p, q$ be MIAs such that $p\miaref q$.
	\begin{enumerate}
		\item $\forall\sigma\in (I\cup O \cup \{ \delta_\Diamond \})^{*} : q\mayafter\sigma \neq \emptyset \Rightarrow \forall p'\in p\mayafter\sigma : \exists q'\in q\mayafter\sigma : p'\miaref q'$
		\item $\forall\sigma\in (I\cup O \cup \{ \delta_\Box \})^{*} : p\mustafter\sigma \neq \emptyset \Rightarrow \forall q'\in q\mustafter\sigma : \exists p'\in p\mustafter\sigma : p'\miaref q'$
	\end{enumerate}
\end{lemma}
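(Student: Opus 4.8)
The plan is to prove both statements by induction on the length of $\sigma$, after isolating a single-step matching argument that can be chained along a path. For a symbol $a\in I\cup O$ the step is an ordinary transition, whereas a symbol $\delta_\Diamond$ (resp.\ $\delta_\Box$) denotes the quiescence self-loop of Def.~\ref{def:modal-init}; I treat it as a length-one step that keeps the state fixed but carries the proof obligation that the matched state is itself may- (resp.\ must-) quiescent. The base case $\sigma=\varepsilon$ is immediate since $p\mayafter\varepsilon=\{p\}$, $q\mayafter\varepsilon=\{q\}$ (and dually for $\mustafter$), and $p\miaref q$ holds by assumption.

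For statement~1 I follow a may-path of $p$ and reconstruct a matching may-path of $q$. Fix $\sigma=\sigma' a$ with $q\mayafter\sigma\neq\emptyset$, pick $p'\in p\mayafter\sigma$, and let $p''\in p\mayafter\sigma'$ with $p''\oversetmay{a}p'$; the induction hypothesis yields $q''\in q\mayafter\sigma'$ with $p''\miaref q''$. I then split on $a$. If $a\in O$, clause~2 of Def.~\ref{def:mia_refinement} applied to $p''\miaref q''$ directly produces $q'$ with $q''\oversetmay{a}q'$ and $p'\miaref q'$. If $a=\delta_\Diamond$, then $p'=p''$ with $\mustinit(p'')\subseteq I$; by the contrapositive of clause~1 a must-output of $q''$ would force a must-output of $p''$, so $\mustinit(q'')\subseteq I$, \ie $q''$ is may-quiescent and carries the self-loop, giving $q'=q''$. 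The remaining case $a=i\in I$ is the delicate one and is discussed below.

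For statement~2 I follow a must-path of $q$ and reconstruct a matching must-path of $p$, which is the cleaner direction. Fix $\sigma=\sigma' a$ with $p\mustafter\sigma\neq\emptyset$ and $q'\in q\mustafter\sigma$; choose $q''\in q\mustafter\sigma'$ with $q''\oversetmust{a}q'$. The induction hypothesis gives $p''\in p\mustafter\sigma'$ with $p''\miaref q''$. If $a\in I\cup O$, clause~1 of Def.~\ref{def:mia_refinement} immediately supplies $p'$ with $p''\oversetmust{a}p'$ and $p'\miaref q'$. If $a=\delta_\Box$, then $q'=q''$ with $\mayinit(q'')\subseteq I$; by clause~2 any may-output of $p''$ would have to be matched by a may-output of $q''$, so $\mayinit(p'')\subseteq I$ and $p''$ is must-quiescent, providing the self-loop. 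The nonemptiness hypothesis $p\mustafter\sigma\neq\emptyset$ guarantees that the path thus constructed indeed lies in $p\mustafter\sigma$, and Proposition~\ref{lemma:may-must-preservation} is used implicitly to relate the two quiescence notions.

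The main obstacle is the input case of statement~1. Here $p''\oversetmay{i}p'$; by Def.~\ref{def:mia} this may-input is in fact a must-input, $p''\oversetmust{i}p'$, and it is input-deterministic. The natural match uses clause~1 \emph{in the other direction}: if $q''\oversetmay{i}\tilde q$ (equivalently $q''\oversetmust{i}\tilde q$), then clause~1 yields $\hat p$ with $p''\oversetmust{i}\hat p$ and $\hat p\miaref\tilde q$, and input-determinism of $p''$ forces $\hat p=p'$, so $p'\miaref\tilde q$ with $\tilde q\in q\mayafter\sigma$. The difficulty is to guarantee that the particular $q''$ delivered by the induction hypothesis can perform $i$ at all: since MIA refinement leaves inputs unconstrained upward (unspecified inputs are implicitly allowed, \cf Def.~\ref{def:mia_refinement}), an output branch matched via clause~2 need not coincide with the branch along which $q$ realises $\sigma$. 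I expect to close this gap by strengthening the induction so that at each output step the $q$-successor is chosen on a may-path witnessing $q\mayafter\sigma\neq\emptyset$, exploiting input-determinism of $q$ so that the input continuation is then forced; in the input-enabled setting obtained after the completions of Sect.~\ref{sec:testing_theory}, every state admits every input and this obstacle disappears outright.
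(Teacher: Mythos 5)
The paper states this lemma without proof, so there is no official argument to compare yours against; judging your proposal on its own merits: your proof of part~2 is correct and complete (following the must-path of $q$, clause~1 of Def.~\ref{def:mia_refinement} handles every $a\in I\cup O$, and your clause-2 argument correctly transfers must-quiescence from $q''$ to $p''$ at $\delta_\Box$-steps; in fact the nonemptiness hypothesis is never needed, since your construction itself produces the required must-path of $p$). The base case, the output case, and the $\delta_\Diamond$ case of part~1 are also right.

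However, the obstacle you flag in the input case of part~1 is not merely delicate --- it is fatal, and your first proposed repair cannot close it. Clause~2 of Def.~\ref{def:mia_refinement} only asserts that \emph{some} may-successor of $q''$ is related to $p''$; which one is dictated by the relation $\mathcal{R}$, so you are not free to ``choose the $q$-successor on a may-path witnessing $q\mayafter\sigma\neq\emptyset$''. Indeed, part~1 is false as stated. Take $I=\{i\}$, $O=\{o\}$; let $q$ have transitions $q_0\oversetmay{o}q_1$ and $q_0\oversetmay{o}q_2$ (both may-only), $q_1\oversetmust{i}q_3$, $q_3\oversetmust{o}q_4$; let $p$ have $p_0\oversetmay{o}p_1$ (may-only) and $p_1\oversetmust{i}p_3$. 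Then $\mathcal{R}=\{(p_0,q_0),(p_1,q_2)\}$ is a MIA-refinement ($q_0$ and $q_2$ have no must-transitions, and $p_0$'s single may-output is matched by $q_0\oversetmay{o}q_2$), so $p_0\miaref q_0$. For $\sigma=o\,i$ we get $q_0\mayafter\sigma=\{q_3\}\neq\emptyset$ and $p_0\mayafter\sigma=\{p_3\}$, yet $p_3\miaref q_3$ fails, since $q_3$ has a must-output and $p_3$ has none. Note that $p_1\miaref q_1$ fails for the same reason, so no refinement relation --- and hence no strengthened induction hypothesis --- can route the matching through the witnessing branch $q_1$. The correct resolution is exactly your fallback: add input-enabledness of $q$ as a hypothesis to part~1. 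Then at an input step the related state $q''$ always has $q''\oversetmust{i}q'$, and clause~1 together with input-determinism and input-mandatoriness of $p$ (Def.~\ref{def:mia}) forces $(p',q')\in\mathcal{R}$, exactly as in your ``natural match''; the induction then closes, and the hypothesis $q\mayafter\sigma\neq\emptyset$ becomes superfluous. This restriction is harmless for the paper: part~1 is only ever invoked in Proposition~\ref{theorem:mioco_refinement}, where $q:=i$ is input-enabled (and $i'$ is input-enabled by Lemma~\ref{lemma:input-enabledness}), while Theorem~\ref{theorem:compl} uses only part~2.
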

This property is very useful when arguing on paths of MIAs related under MIA-refinement (\cf~Theorem~\ref{theorem:compl}).
While for MTS we distinguished may from must input-enabledness, there is no difference between both in case of MIA, as inputs in MIAs are mandatory.
\begin{definition}[Input-Enabledness for MIA]\label{def:input_enabledness_mia}
	A MIA $Q$ is \emph{input-enabled} iff for all $q\in Q$ and for all $i\in I$, it holds that $q\oversetmust{i}$.
\end{definition}
Henceforth, we require product line implementations $i$ to be given as input-enabled MIAs in order to meet the assumptions originally made by \ioco that implementations do not deadlock while processing inputs not being serviced by the implementation.
Input-enabledness of MIA is preserved under MIA-refinement.
\begin{lemma}\label{lemma:input-enabledness}
	Let $q, r$ be MIAs over $I$ and $O$ such that $r\miaref q$.
	If $q$ is input-enabled, then $r$ is input-enabled.
\end{lemma}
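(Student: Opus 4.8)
The plan is to show that input-enabledness transfers along the refinement relation $\mathcal{R}$ witnessing $r \miaref q$. Concretely, I would take an arbitrary state $r' \in r$ and an arbitrary input $i \in I$, and prove $r' \oversetmust{i}$. Since $r \miaref q$, there is a MIA-refinement $\mathcal{R}$ with $(r_0, q_0) \in \mathcal{R}$ for the relevant states; the essential observation is that every state $r'$ of $r$ reachable in the analysis is related to \emph{some} state $q'$ of $q$ under $\mathcal{R}$. I would therefore first argue that for each such $r'$ there exists $q'$ with $(r', q') \in \mathcal{R}$.

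Given $(r', q') \in \mathcal{R}$, the argument proceeds by combining input-enabledness of $q$ with clause~(1) of Def.~\ref{def:mia_refinement}. Since $q$ is input-enabled, for the chosen $i \in I$ we have $q' \oversetmust{i}$, \ie there exists $q''$ with $q' \oversetmust{i} q''$. Clause~(1) of MIA-refinement, instantiated with $a = i \in I \cup O$, then yields some $r''$ with $r' \oversetmust{i} r''$ and $(r'', q'') \in \mathcal{R}$. This establishes $r' \oversetmust{i}$, which is exactly input-enabledness of $r'$. Because $i \in I$ and $r' \in r$ were arbitrary, $r$ is input-enabled by Def.~\ref{def:input_enabledness_mia}. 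Note that this direction uses only the \emph{must}-simulation clause of refinement, which transports mandatory transitions of $q$ \emph{downward} into $r$ — precisely the direction needed, since input transitions are mandatory in MIAs and input-enabledness is a statement about must-transitions.

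The main subtlety, and the step I would be most careful about, is justifying that every state $r'$ of interest is indeed $\mathcal{R}$-related to some state of $q$. Definition~\ref{def:input_enabledness_mia} quantifies over \emph{all} states $q \in Q$, not merely reachable ones, so I cannot simply propagate the relation along reachable paths from the initial pair. To handle this cleanly, I would phrase input-enabledness as a closure property of $\mathcal{R}$: I claim it suffices to show the implication ``if $(r', q') \in \mathcal{R}$ then $r'$ is input-enabled'' for every pair in $\mathcal{R}$, and then observe that any state $r'$ participating in $\mathcal{R}$ is covered. One must additionally confirm that, if the definition of input-enabledness for $r$ is meant to range over all of $r$'s states including those not appearing in $\mathcal{R}$, such states are irrelevant to the hypothesis $r \miaref q$; in practice the refinement $r \miaref q$ is witnessed relative to designated states, so the quantification should be read accordingly. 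Once this relational reading is fixed, the remainder is the short two-line application of input-enabledness of $q$ and clause~(1) sketched above.
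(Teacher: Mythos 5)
The paper states this lemma without proof, so there is nothing to compare against; your argument must stand on its own. Its core step is the right one and is surely the intended argument: given a pair $(r',q')\in\mathcal{R}$, input-enabledness of $q$ yields $q'\oversetmust{i}q''$ for the chosen $i\in I$, and clause~(1) of Def.~\ref{def:mia_refinement} transports this must-transition downward to give $r'\oversetmust{i}$. That direction of the alternating simulation is exactly what is needed, since inputs in MIAs are must-transitions.

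The gap is in how you handle coverage, which you correctly flag as the crux but then resolve by fiat rather than by proof. Re-reading input-enabledness as the closure property ``every state occurring in $\mathcal{R}$ is input-enabled'' weakens the statement: input-enabledness is a property of $r$ alone (Def.~\ref{def:input_enabledness_mia}), not of $r$ relative to a relation. The standard repair, which the paper implicitly adopts by identifying a MIA with its designated state, is to restrict attention to states of $r$ reachable from that state and to prove, by induction on path length, that every reachable state of $r$ lies in the domain of $\mathcal{R}$. This induction is not a formality; it is where the hypothesis does additional work. For a may-output step $r'\oversetmay{a}r''$ with $a\in O$, clause~(2) of Def.~\ref{def:mia_refinement} places $r''$ in the domain of $\mathcal{R}$. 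But for an input step $r'\oversetmust{i}r''$ (recall every may-input of a MIA is a must-input), clause~(2) says nothing, and clause~(1) only produces \emph{some} $r'''$ with $r'\oversetmust{i}r'''$ and $(r''',q'')\in\mathcal{R}$ --- and it produces even that only because $q'$ is input-enabled, i.e., because a transition $q'\oversetmust{i}q''$ exists at all. One then needs the input-determinism required of MIAs (Def.~\ref{def:mia}) to conclude $r'''=r''$, so that the actual successor $r''$ is covered. This appeal to $q$'s input-enabledness inside the coverage argument is essential: refinement alone permits $r$ to add input transitions on inputs unspecified in $q$, whose targets are unconstrained by $\mathcal{R}$ and need not be input-enabled. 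So the subtlety you identified is the actual mathematical content of the lemma, and closing it requires the reachability induction (using input-determinism and a second use of $q$'s input-enabledness), not a re-reading of the quantifier.
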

This also holds for variant $p$ derived from $q$.
In \ioco, no distinction is made between specified mandatory and optional behavior.
A first conformance relation supporting optional behaviors is \mior~\cite{lochau:modelbasedtesting}.
It holds that $i\mathmior s$ in case of trace inclusion of may-suspension-traces as well as must-suspension-traces, respectively.
However, if we interpret the set of must-behaviors specified by $s$ as the product line core behavior incorporated by all variants, this notion of conformance fails to fully capture this intuition.
Suspension trace inclusion solely ensures \emph{some} behaviors of the specified behaviors to be actually implemented (if any), but it does not differentiate within the set of allowed behaviors between mandatory and optional ones.

Figure~\ref{fig:mioco_example} illustrates the weakness of \mior.
Assuming Figure~\ref{fig:mts_example} as specification and the other two Figures to be implementations, then both implementations are correct under \mior.
For Figure~\ref{fig:mioco_example_correct}, this is obvious as only optional behavior is left out.
The problem of \mior is depicted in the implementation of Figure~\ref{fig:mioco_example_incorrect}.
Therein, the mandatory behavior outputting the \emph{cup} after the input \emph{tea} is left out but the \mior relation still holds as no behavior is added.
This contradicts the intention of mandatory behaviors as core behaviors of all product line variants.
To overcome this drawback, consider an alternative definition for I/O conformance, denoted by \miorpre~\cite{lochau:modelbasedtesting}, being closer to the very essence of modal refinement requiring alternating suspension trace inclusions.
The \miorpre relation requires implementation $i$ to show at least all mandatory behaviors and at most the allowed behaviors of a specification $s$.
Following this idea, the respective modal version of \ioco is defined as follows.
\begin{definition}[Modal Input/Output Conformance]\label{def:mioco}
	Let $s, i$ be MIAs over $I, O$ and $i$ being input-enabled.
	$i\miocomia s$ iff
	\begin{enumerate}
		\item $\forall\sigma\in Straces_\Diamond(s):Out_\Diamond(i\after_\Diamond\sigma)\subseteq Out_\Diamond(s\after_\Diamond\sigma)$, and
		\item $\forall\sigma\in Straces_\Box(i):Out_\Box(s\after_\Box\sigma)\subseteq Out_\Box(i\after_\Box\sigma)$.
	\end{enumerate}
\end{definition}
In the first part of checking $i\miocomia s$, we consider all specified suspension-traces and essentially check $i\mathioco s$.
In the second part, we only consider must-suspension-traces of $i$ and essentially check $s\mathioco i$.
That way, we make sure that the implementation does not add forbidden behavior or ignores mandatory behavior.
Requiring input-enabledness for specifications of $i$ is infeasible for realistic test modeling approaches.
However, an artificial input-enabledness for incomplete specifications of $i$ may always be achieved by {\em completions} of $i$ (cf. Sect.~\ref{subsec:completions}).

Let us reconcile Figure~\ref{fig:mioco_example} with the \mioco relation instead of \mior.
Again, Figure~\ref{fig:mts_example} depicts the specification and the other two figures represent the implementations.
The implementation in Figure~\ref{fig:mioco_example_correct} is still correct,
whereas that in Figure~\ref{fig:mioco_example_incorrect} discards the mandatory action \emph{!cup} after the action \emph{?tea}, thus being incorrect regarding \mioco.

MIA-refinement is considered in two ways, first as an implementation relation (\miaref) and second as a relation for variant derivation (\confref).
For $\miocomia$ to yield a family-based conformance testing relation, it should be preserved by \confref, \ie if $i\miocomia s$ is checked for a product line implementation $i$ and its specification $s$, then this check can be neglected for the variants derivable from $i$.
Due to the fact that implementations are input-enabled, $\miocomia$ is also preserved by \miaref.
\begin{proposition}[MIA-Refinement preserves $\miocomia$]\label{theorem:mioco_refinement}
	Let $s, i$ be MIAs over $I$ and $O$ such that $i$ is input-enabled.
	If $i\miocomia s$, then for all $i'\miaref i$ it holds that $i'\miocomia s$.
\end{proposition}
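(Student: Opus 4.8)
The plan is to verify the two clauses of Def.~\ref{def:mioco} for the pair $(i', s)$ from the hypotheses $i \miocomia s$ and $i' \miaref i$. As a preliminary, input-enabledness of $i$ together with $i' \miaref i$ yields input-enabledness of $i'$ by Lemma~\ref{lemma:input-enabledness}, so that $i' \miocomia s$ is well-defined. I then transfer each clause from $i$ to $i'$ along the witnessing refinement, exploiting that refinement can only shrink may-behavior (condition~(2) of Def.~\ref{def:mia_refinement}) while preserving, and possibly sharpening, must-behavior (condition~(1)).

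For the may-clause, fix $\sigma \in \maystraces(s)$; it suffices to prove $\mayout(i' \mayafter \sigma) \subseteq \mayout(i \mayafter \sigma)$ and then chain with clause~(1) of $i \miocomia s$. This is trivial when $i' \mayafter \sigma = \emptyset$. Otherwise $\sigma \in \maystraces(i')$, and a straightforward induction on $\sigma$ --- using condition~(2) of Def.~\ref{def:mia_refinement} for output steps, input-enabledness of $i$ with input-determinism for input steps, and may-quiescence preservation for $\delta_\Diamond$ --- shows that $i \mayafter \sigma \neq \emptyset$ and, as in Lemma~\ref{lemma:refinement-outstates}(1), that every $p' \in i' \mayafter \sigma$ is related to some $q' \in i \mayafter \sigma$ with $p' \miaref q'$. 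For $\mu \in O$, condition~(2) lifts $p' \oversetmay{\mu}$ to $q' \oversetmay{\mu}$; for $\mu = \delta_\Diamond$, may-quiescence of $p'$ yields may-quiescence of $q'$, since a must-output of $q'$ would, by condition~(1), force a must-output of $p'$. Hence $\mayout(i' \mayafter \sigma) \subseteq \mayout(i \mayafter \sigma)$, as required.

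For the must-clause, fix $\sigma \in \muststraces(i')$; the goal is $\mustout(s \mustafter \sigma) \subseteq \mustout(i' \mustafter \sigma)$. Here the quantification domains no longer match, since refinement may promote optional transitions of $i$ to mandatory ones of $i'$, so $\muststraces(i) \subseteq \muststraces(i')$ with the inclusion generally strict. I first establish the auxiliary containment $\muststraces(s) \subseteq \muststraces(i)$ by induction on must-traces: $\varepsilon$ lies in both, and if $\sigma \in \muststraces(s) \cap \muststraces(i)$ with $\sigma a \in \muststraces(s)$, then for $a \in O \cup \{\delta_\Box\}$ we have $a \in \mustout(s \mustafter \sigma)$, so clause~(2) of $i \miocomia s$ places $a$ in $\mustout(i \mustafter \sigma)$ and hence $\sigma a \in \muststraces(i)$, while for $a \in I$ input-enabledness of $i$ directly gives $\sigma a \in \muststraces(i)$. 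With this in hand, if $s \mustafter \sigma = \emptyset$ the inclusion is trivial; otherwise $\sigma \in \muststraces(s) \subseteq \muststraces(i)$, so clause~(2) of $i \miocomia s$ gives $\mustout(s \mustafter \sigma) \subseteq \mustout(i \mustafter \sigma)$. Finally, $\sigma \in \muststraces(i')$ ensures $i' \mustafter \sigma \neq \emptyset$, so Lemma~\ref{lemma:refinement-outstates}(2) relates each $q' \in i \mustafter \sigma$ to some $p' \in i' \mustafter \sigma$ with $p' \miaref q'$; condition~(1) of Def.~\ref{def:mia_refinement} lifts must-outputs from $q'$ to $p'$, and a must-quiescent $q'$ forces a must-quiescent $p'$ (a may-output of $p'$ would, by condition~(2), induce a may-output of $q'$). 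Chaining $\mustout(s \mustafter \sigma) \subseteq \mustout(i \mustafter \sigma) \subseteq \mustout(i' \mustafter \sigma)$ closes this clause.

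The main obstacle is precisely this domain mismatch in the must-clause: because refinement may turn optional behavior of $i$ into mandatory behavior of $i'$, clause~(2) of $i \miocomia s$ cannot be applied directly to the possibly new traces $\sigma \in \muststraces(i')$. The auxiliary containment $\muststraces(s) \subseteq \muststraces(i)$ --- whose proof hinges on input-enabledness of $i$ to absorb the input case --- is what resolves this, since on the extra must-traces of $i'$ the specification $s$ imposes no mandatory output and the inclusion is vacuous.
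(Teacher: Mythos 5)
Your proof is correct, and although it follows the same skeleton as the paper's proof --- verify the two clauses of Def.~\ref{def:mioco} separately, obtain input-enabledness of $i'$ from Lemma~\ref{lemma:input-enabledness}, and chain output-set inclusions through $i$ using Lemma~\ref{lemma:refinement-outstates} --- it is genuinely stronger in the must-clause, where it repairs a gap that the paper's own argument leaves open. The paper concludes $\mustout(s\mustafter\sigma)\subseteq\mustout(i\mustafter\sigma)\subseteq\mustout(i'\mustafter\sigma)$ ``by transitivity'' for all $\sigma\in\muststraces(i')$, but the first inclusion is supplied by $i\miocomia s$ only for $\sigma\in\muststraces(i)$, and $\muststraces(i')$ is in general a \emph{strict} superset of $\muststraces(i)$, since refinement may promote optional transitions of $i$ to mandatory ones of $i'$ --- exactly the domain mismatch you identify. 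Your auxiliary containment $\muststraces(s)\subseteq\muststraces(i)$, proved by induction with input-enabledness of $i$ absorbing the input steps, together with the case split on $s\mustafter\sigma=\emptyset$, is precisely what closes this hole. Your argument also locates the real role of input-enabledness more accurately: the paper invokes it to justify $\mustout(i\mustafter\sigma)\subseteq\mustout(i'\mustafter\sigma)$, which in fact follows from refinement alone (given $i'\mustafter\sigma\neq\emptyset$), whereas it is indispensable in your auxiliary containment and in matching input steps of $i'$ by $i$ in the may-clause, since condition~(2) of Def.~\ref{def:mia_refinement} constrains only outputs. Likewise, your explicit induction establishing $i\mayafter\sigma\neq\emptyset$ before applying Lemma~\ref{lemma:refinement-outstates}(1) makes rigorous a step the paper passes over silently. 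In short: same route, but yours is the version that actually goes through without further patching.
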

\begin{proof}
The fact $i\miocomia s$ implies that $\mayout(i\mayafter \sigma) \subseteq \mayout(s\mayafter \sigma)$ for all $\sigma\in\maystraces(s)$ holds and $\mustout(s\mustafter \sigma) \subseteq \mustout(i\mustafter \sigma)$ for all $\sigma\in\muststraces(i)$ holds as well.
Let $i'$ be a MIA such that $i'\miaref i$.
Due to Lemma~\ref{lemma:refinement-outstates} and Lemma~\ref{lemma:input-enabledness}, $i'$ is input-enabled and for all $\sigma\in(I\cup O\cup\{\delta_\Diamond \})^{*}$, it holds that $\mayout(i'\mayafter \sigma)\subseteq \mayout(i\mayafter \sigma)$.
By transitivity of $\subseteq$, $\mayout(i'\mayafter \sigma)\subseteq\mayout(s\mayafter \sigma)$ holds for all $\sigma\in\maystraces(s)\subseteq(I\cup O\cup\{\delta_\Diamond \})^{*}$.

We now prove that also $\mustout(s\mustafter \sigma)\subseteq \mustout(i'\mustafter \sigma)$ for all $\sigma\in\muststraces(i')$.
As $i$ is input-enabled, it holds that $\mustout(i\mustafter\sigma)\subseteq\mustout(i'\mustafter\sigma)$ for all $\sigma\in(I\cup O\cup \{\delta_\Box\})^{*}$.
Therefore, $\mustout(s\mustafter\sigma)\subseteq\mustout(i'\mustafter\sigma)$ holds for all $\sigma\in\muststraces(i')\subseteq(I\cup O\cup \{\delta_\Box\})$ by transitivity of $\subseteq$.
Thus, $i'\miocomia s$.
\end{proof}
%
Next we show how to achieve input-enabledness by so-called \emph{completions}.
%
\subsection{Completions for MIA}
\label{subsec:completions}
In \mioco, we permit states to be underspecified, \ie we may leave open how a state $q\in Q$ of an implementation behaves in case of action $a\in(I\cup O)$ if $q\not\oversetmay{a}$.
Underspecification comes in two flavors: underspecification of input actions and underspecification of output actions.
Underspecification of output actions is explicit, \ie a state can only perform outputs attached to one of its transitions.
In contrast, underspecification of input actions is implicit, \ie a state accepts every possible input of the input set (even if there is no dedicated transition).
In this section, we present two transformations from underspecified to specified MIA accepting every possible input action, namely \emph{angelic completion} and \emph{chaotic completion}.
Both completions are described using the underspecified MIA depicted in Figure~\ref{fig:completions_example_specification}.
In this MIA with $I=\{\text{\emph{coffee, tea}}\}$ the two lower states are underspecified.
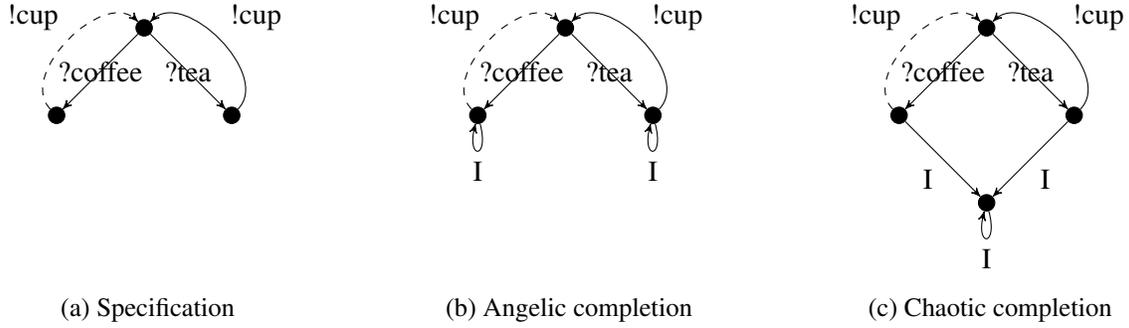
\begin{figure}
	\begin{subfigure}[b]{0.3\textwidth}
		\centering

\begin{tikzpicture}

\node[state] (q0) {};
\node[state, below left=of q0] (q1) {};
\node[state, below right=of q0] (q2) {};
\node[emptystate, below right=of q1] (q3) {};

\draw[must] (q0) to node[pos=.5] {?coffee} (q1);
\draw[may, bend left=90] (q1) to node[auto] {!cup} (q0);
\draw[must,bend right=90] (q2) to node[auto,swap] {!cup} (q0);
\draw[must] (q0) to node[pos=.5] {?tea} (q2);
\draw[emptyarrow,loop below] (q3) to node[auto] {\phantom{I}} (q3);

\end{tikzpicture}
		\subcaption{Specification}
		\label{fig:completions_example_specification}
	\end{subfigure}
	\hfill
	\begin{subfigure}[b]{0.3\textwidth}
		\centering

\begin{tikzpicture}

\node[state] (q0) {};
\node[state, below left=of q0] (q1) {};
\node[state, below right=of q0] (q2) {};
\node[emptystate, below right=of q1] (q3) {};

\draw[must] (q0) to node[pos=.5] {?coffee} (q1);
\draw[may, bend left=90] (q1) to node[auto] {!cup} (q0);
\draw[must,bend right=90] (q2) to node[auto,swap] {!cup} (q0);
\draw[must] (q0) to node[pos=.5] {?tea} (q2);
\draw[must,loop below] (q1) to node[auto] {I} (q1);
\draw[must,loop below] (q2) to node[auto] {I} (q2);
\draw[emptyarrow,loop below] (q3) to node[auto] {\phantom{I}} (q3);

\end{tikzpicture}
		\subcaption{Angelic completion}
		\label{fig:completions_example_angelic}
	\end{subfigure}
	\hfill
	\begin{subfigure}[b]{0.3\textwidth}
		\centering

\begin{tikzpicture}

\node[state] (q0) {};
\node[state, below left=of q0] (q1) {};
\node[state, below right=of q0] (q2) {};
\node[state, below right=of q1] (q3) {};

\draw[must] (q0) to node[pos=.5] {?coffee} (q1);
\draw[may, bend left=90] (q1) to node[auto] {!cup} (q0);
\draw[must,bend right=90] (q2) to node[auto,swap] {!cup} (q0);
\draw[must] (q0) to node[pos=.5] {?tea} (q2);
\draw[must] (q1) to node[auto,swap] {I} (q3);
\draw[must] (q2) to node[auto] {I} (q3);
\draw[must,loop below] (q3) to node[auto] {I} (q3);

\end{tikzpicture}
		\subcaption{Chaotic completion}
		\label{fig:completions_example_chaotic}
	\end{subfigure}
	\caption{Specification of a simplified vending machine and two completion strategies, where \emph{I} denotes one transition for both \emph{?coffee} and \emph{?tea}.}
	\label{fig:completions_example}
\end{figure}
One possibility for completion, called \emph{angelic} by Vaandrager~\cite{Vaandrager1991}, is to ignore unspecified inputs.
An angelically completed automaton \textit{MIA}$_{\textit{AC}}$ of a given \textit{MIA} is obtained by adding self-loop transitions to every state $q\in Q$ for every input $i\in I$ not being accepted by the state.
In Figure~\ref{fig:completions_example_angelic}, we added self-loops to the bottom states for input actions \emph{coffee} and \emph{tea}.
\begin{definition}[Angelic Completion]\label{def:angelic}
	Given a \textit{MIA} $(Q,I,O,\mustarrow,\mayarrow)$, its angelic completion \textit{MIA}$_{\textit{AC}}$ is defined as $(Q,I,O,\mustarrow',\mayarrow')$, where
	\begin{itemize}
		\item $\mustarrow'=\mustarrow\cup \{(q,i,q)\,|\,q\in Q\text{, }i\in I\text{, }q\not\oversetmust{i}\}$, and
		\item $\mayarrow'=\mayarrow\cup \{(q,i,q)\,|\,q\in Q\text{, }i\in I\text{, }q\not\oversetmay{i}\}$.
	\end{itemize}
\end{definition}
\emph{Chaotic completion} is also based on the work of Vaandrager~\cite{Vaandrager1991}, where the automaton is no more able to do any outputs as soon as an unspecified input occurred.
A chaotically completed automaton \textit{MIA}$_{CC}$ is obtained by adding a fresh error state which is entered whenever an unspecified input actions occurs.
In Figure~\ref{fig:completions_example_chaotic}, we added transitions from the states with underspecified input behavior to the error state.
Note that the error state is a so-called \emph{sink state}, because once reached, it will never be left.
\begin{definition}[Chaotic Completion]\label{def:chaotic}
	Given a \textit{MIA} $(Q,I,O,\mustarrow,\mayarrow)$, its chaotic completion \textit{MIA}$_{CC}$ is defined as $(Q',I,O,\mustarrow',\mayarrow')$, where
	\begin{itemize}
		\item $Q'=Q\cup\{q_E\}$, where $q_E\notin Q$,
		\item $\mustarrow'=\mustarrow\cup\{(q,i,q_E)\,|\,q\in Q\text{, }i\in I\text{, }q\not\oversetmust{i}\}\cup\{(q_E,\lambda,q_E)|\lambda\in I\}$, and
		\item $\mayarrow'=\mayarrow\cup\{(q,i,q_E)\,|\,q\in Q\text{, }i\in I\text{, }q\not\oversetmay{i}\}\cup\{(q_E,\lambda,q_E)|\lambda\in I\}$.
	\end{itemize}
\end{definition}
These results complete our discussions on modal testing theory based on MIA.
We now consider soundness and completeness notions for \mioco and give corresponding proofs.

\section{Correctness of \texorpdfstring{$\miocomia$}{mioco for MIA}}\label{sec:results} 
%
In order to serve as a reliable basis for family-based product line conformance testing, it is necessary for \mioco to be (1) sound, \ie whenever $i\miocomia s$ holds, then each implementation variant derivable from $i$ also conforms to a variant of $s$, and (2) complete, \ie whenever all variants of $i$ are correct w.\,r.\,t.\ \ioco and to the product line specification $s$, then $i\miocomia s$ holds.
In this section, we prove soundness of $\miocomia$ and discuss under which conditions completeness of \mioco may be obtained.
Whenever we use $\miocomia$ to show that a modal implementation $i$ conforms to the modal specification $s$, for each variant $i'\confref i$ there is a variant $s'\confref s$ such that $i'\mathioco s'$ holds.
By checking $i\miocomia s$ once, checking each and every variant of $i$ against some variant of $s$ can be omitted.
This is a remarkable improvement compared to variant-by-variant conformance testing, due to the exponentially growing number of variants $i'$ in the number of optional transitions of $i$.

For soundness, we need to take into account that all considered $i'$ are variants derived from $i$.
By Def.~\ref{def:conf_refinement}, each $i'$ contains at most those transitions being may transitions of $i$.
Therefore, $i$ restricts the set of possible transitions of $i'$ and as $i\miocomia s$ holds, also $s$ restricts the set of possible transitions of $s'$.
It is sufficient that the output behavior of $i'$ is included in that of $s'$, but not vice versa.
We, therefore, choose a single $s'$ for each $i'\confref i$, the {\em Family-LTS of $s$}, denoted by $\famlts s$, consisting of all may transitions of $s$.
%
	If $Q$ ($q$) is a MIA, then $\famlts{Q}$ ($\famlts{q}$) is the LTS with $\famlts{Q} = Q$ and $\famlts{\transrel{}}=\mayarrow$.
%
As $\famlts{\transrel{}}=\mayarrow$ and, therefore, $\famlts{\transrel{}}\subseteq\mayarrow$, it holds that $\famlts{q} \confref q$ for every MIA $q$.
Using $s' = \famlts{s}$ enables us to prove soundness.
\begin{theorem}[Soundness]\label{theorem:soundness}
  Let $s$ and $i$ be MIAs such that $i$ is input-enabled.
  If $i\miocomia s$, then for all $i'\confref i$, there exists some $s'\confref s$ such that $i'\mathioco s'$ holds.
\end{theorem}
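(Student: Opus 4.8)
The plan is to exhibit, for each variant $i'\confref i$, the \emph{single} specification variant $s':=\famlts{s}$ and to verify the classical relation $i'\mathioco\famlts{s}$. Two preliminary facts make this well-posed: $\famlts{s}\confref s$ holds, since $\famlts{\transrel{}}=\mayarrow\subseteq\mayarrow$ as noted just before the theorem; and, because $i$ is input-enabled and $i'\miaref i$, Lemma~\ref{lemma:input-enabledness} guarantees that $i'$ is input-enabled, so $i'$ is a genuine IOTS and $\famlts{s}$ a genuine IOLTS. By Definition~\ref{def:ioco} it then remains to establish $\out(i'\after\sigma)\subseteq\out(\famlts{s}\after\sigma)$ for every $\sigma\in\straces(\famlts{s})$ (the case $\sigma\notin\straces(i')$ being trivial, as then $i'\after\sigma=\emptyset$).

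First I would assemble a chain of inclusions at the may-level. Since the transitions of $\famlts{s}$ are exactly the may-transitions of $s$ and its classical quiescence coincides with must-quiescence $\delta_\Box$ of $s$, Proposition~\ref{lemma:may-must-preservation}(2) yields $\straces(\famlts{s})\subseteq\maystraces(s)$, so every $\sigma$ under consideration is a may-suspension-trace of $s$, which is precisely what part~(1) of Definition~\ref{def:mioco} quantifies over. Next, from $i'\confref i$ I obtain, via Lemma~\ref{lemma:lts_to_modal} and Lemma~\ref{lemma:refinement-outstates}(1), that each state in $i'\after\sigma$ refines some state in $i\mayafter\sigma$; since every output transition of $i'$ is a may-transition of $i$ and classical quiescence of an $i'$-state forces $\delta_\Diamond$ of its $i$-image, this gives $\out(i'\after\sigma)\subseteq\mayout(i\mayafter\sigma)$ (identifying the suspension symbol $\delta$ with $\delta_\Diamond$). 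Applying $i\miocomia s$ then delivers $\mayout(i\mayafter\sigma)\subseteq\mayout(s\mayafter\sigma)$.

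The main obstacle is the final link, relating $\mayout(s\mayafter\sigma)$ back to $\out(\famlts{s}\after\sigma)$. Here the tempting identity $\famlts{s}\after\sigma=s\mayafter\sigma$ is \emph{false} in general: classical quiescence in $\famlts{s}$ is must-quiescence $\delta_\Box$, whereas $\mayafter$ uses may-quiescence $\delta_\Diamond$, and a state carrying only optional outputs is $\delta_\Diamond$ but not $\delta_\Box$. Consequently $\famlts{s}\after\sigma$ may be a proper subset of $s\mayafter\sigma$, so the coarse inclusion $\mayout(s\mayafter\sigma)\subseteq\out(\famlts{s}\after\sigma)$ need not hold; the difficulty is compounded by the fact that pruning optional outputs when forming a variant can itself \emph{create} quiescence, and hence suspension traces, that $\famlts{s}$ does not exhibit. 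The argument must therefore descend to the path level: for a concrete $\mu\in\out(i'\after\sigma)$ one follows the actual $i'$-path realising $\sigma\mu$, observes that each $\delta$ in $\sigma$ is taken at a genuinely output-free $i'$-state whose $i$-image is $\delta_\Diamond$, and tracks the induced states so as to produce a $\delta_\Box$-consistent path in $\famlts{s}$ that still emits $\mu$. Making this correspondence precise -- in particular, controlling the interaction between output non-determinism of $s$ and the fresh quiescence introduced by deleting optional outputs -- is the delicate heart of the proof, and is where one must either exploit additional structure of the variants or carry out the careful case analysis that the $\delta_\Box/\delta_\Diamond$ distinction demands.
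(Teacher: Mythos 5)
Your strategy coincides with the paper's: you pick the uniform witness $s'=\famlts{s}$, reduce $i'\mathioco\famlts{s}$ to the chain $\out(i'\after\sigma)\subseteq\mayout(i\mayafter\sigma)\subseteq\mayout(s\mayafter\sigma)$ (the second inclusion by part~(1) of Definition~\ref{def:mioco}), and then must relate $\mayout(s\mayafter\sigma)$ back to $\out(\famlts{s}\after\sigma)$. You establish the first two links but leave the third unproven, so as a proof your text is incomplete; that is a genuine gap, and you say so yourself. What you should know is that your diagnosis of this gap is exactly right, and the gap is not closable: the paper's own proof simply asserts ``by construction of $\famlts{s}$ it holds that $\out(\famlts{s}\after\sigma)=\mayout(s\mayafter\sigma)$'', and that identity is false for precisely the reason you give. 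Classical quiescence in $\famlts{s}$ is must-quiescence $\delta_\Box$ of $s$ (no output \emph{may}-transitions), whereas $\mayout$ awards $\delta_\Diamond$ to any state lacking output \emph{must}-transitions; a state of $s$ whose only outputs are optional is $\delta_\Diamond$ but not quiescent in $\famlts{s}$.

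Moreover, this is not a technicality that a more careful path-level argument can repair, because the statement the paper actually proves --- that \emph{every} variant $i'\confref i$ satisfies $i'\mathioco\famlts{s}$ --- is false. Counterexample: let $s=i$ be the MIA with a single \emph{optional} output transition $s_0\oversetmay{!a}s_1$ and a must self-loop for every input at every state (so $i$ is input-enabled). Then $i\miocomia s$ holds trivially, since $i$ and $s$ are identical. The variant $i'\confref i$ that prunes the optional $!a$ (keeping the input loops) has $\delta\in\out(i'\after\epsilon)$, yet $\out(\famlts{s}\after\epsilon)=\{a\}$ contains no $\delta$, so $i'\mathioco\famlts{s}$ fails at $\sigma=\epsilon\in\straces(\famlts{s})$. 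This is exactly the phenomenon you flagged: deleting optional outputs \emph{creates} quiescence that $\famlts{s}$ does not exhibit. The theorem itself, which only demands \emph{some} $s'\confref s$ per variant, survives this example --- take $s'$ to be the variant of $s$ that prunes $!a$ as well --- but a correct proof must construct $s'$ \emph{depending on} $i'$, mirroring which optional outputs $i'$ drops; neither your attempt nor the paper's proof does this. In short: your write-up is incomplete, but the step you declined to assert is in fact unprovable, and the paper's published proof is unsound at that very step.
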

%
%
\begin{proof}
  We prove $i\miocomia s\Rightarrow \forall i'\confref i: \exists s'\confref s: i'\mathioco s' $ by contradiction.
  We choose $s'$ to be $\famlts{s}$ for all $i'\confref i$.
  Assume that there is an $i'\confref i$ such that $i'\mathioco \famlts{s}$ does not hold, \ie there exists a $\sigma\in \straces(\famlts{s})$ such that $\out(i'\after\sigma)\not\subseteq \out(s'\after\sigma)$.
  By Lemma~\ref{lemma:lts_to_modal}, $\sigma\in\maystraces(s)$ and by construction of $\famlts{s}$ it holds that $\out(\famlts{s}\after\sigma) = \mayout(s\mayafter\sigma)$.
  It also holds that $\out(i'\after\sigma) \subseteq \mayout(i\mayafter\sigma)$, which implies that $\mayout(i\mayafter\sigma)\not\subseteq\mayout(s\mayafter\sigma)$ contradicting the assumption that $i\miocomia s$.
  Thus, there is no $i'\confref i$ such that $i'\mathioco \famlts{s}$ does not hold.
\end{proof}
%
%
The converse does not hold in general.
Consider the MIAs of Figure~\ref{fig:counterexample}, where $\famlts{s} = s$ and each variant $i'\confref i$ exhibits $i'\mathioco s$.
However, $i\miocomia s$ does not hold, as $s$ specifies an output $b$ as mandatory while in $i$, the $b$-transition is optional.
%
\begin{figure}
  \centering

\begin{tikzpicture}
	\node[state,label=above:$i$] (q0) {};
	\node[state, below left=of q0] (q1) {};
	\node[state, below right=of q0] (q2) {};

	\draw[must] (q0) to node[auto,swap] {!a} (q1);
	\draw[may] (q0) to node[auto] {!b} (q2);
\end{tikzpicture}
    \qquad\qquad
\begin{tikzpicture}
	\node[state,label=above:$s$] (q0) {};
	\node[state, below left=of q0] (q1) {};
	\node[state, below right=of q0] (q2) {};

	\draw[must] (q0) to node[auto,swap] {!a} (q1);
	\draw[must] (q0) to node[auto] {!b} (q2);
\end{tikzpicture}
  \caption{Each variant of $i$ conforms to $s$ (ioco), but $i\miocomia s$ does not hold.}\label{fig:counterexample}
\end{figure}
We observe that each \ioco check does not cover the fact that mandatory behavior of the specification $s$ must also be mandatory behavior of $i$.
This is due to the fact that in \ioco only allowed outputs may be implemented, but an obligation to implement any output, as imposed by must-modalities, is not covered.
If we ensure that mandatory behavior of $s$ is preserved by $i$, as \eg under MIA-refinement, the completeness claim holds.
Thus, we obtain the following completeness claim.
\begin{theorem}[Completeness I]\label{theorem:compl}
	Let $i, s$ be MIAs such that $i$ is input-enabled and $i\miaref s$.
	If for all $i'\confref i$ it holds that $i' \mathioco \famlts{s}$, then $i\miocomia s$.
\end{theorem}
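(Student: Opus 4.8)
The plan is to unfold $i\miocomia s$ according to Def.~\ref{def:mioco} into its two conjuncts and to discharge them from two \emph{different} hypotheses: the may-conjunct (condition~1) will follow from the \ioco-assumption, whereas the must-conjunct (condition~2) will follow purely from $i\miaref s$ and will not use the \ioco-assumption at all. The counterexample in Figure~\ref{fig:counterexample} shows precisely why the \ioco-assumption alone cannot deliver condition~2 (mandatory behaviour of $s$ need not be mandatory in $i$), and hence why the refinement hypothesis is indispensable here.

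For the may-part I would follow the pattern of the Soundness proof and instantiate the assumption at the single, most permissive variant $\famlts{i}$. Since $i$ is input-enabled and $\famlts{i}\miaref i$, Lemma~\ref{lemma:input-enabledness} makes $\famlts{i}$ an input-enabled IOLTS, hence a legal IOTS, and $\famlts{i}\confref i$. The assumption therefore yields $\famlts{i}\mathioco\famlts{s}$, \ie $\out(\famlts{i}\after\sigma)\subseteq\out(\famlts{s}\after\sigma)$ for all $\sigma\in\straces(\famlts{s})$. Because $\famlts{i}$ and $\famlts{s}$ keep exactly the may-transitions of $i$ and $s$, the same construction identities used in the Soundness proof apply, namely $\straces(\famlts{s})=\maystraces(s)$, $\out(\famlts{i}\after\sigma)=\mayout(i\mayafter\sigma)$ and $\out(\famlts{s}\after\sigma)=\mayout(s\mayafter\sigma)$; substituting these gives condition~1 immediately. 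The point worth stressing is that the single variant $\famlts{i}$ already exhibits every may-output of $i$, so no quantification over all variants is needed for this direction.

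For the must-part I would drop the \ioco-assumption and argue from $i\miaref s$. Fix $\sigma\in\muststraces(i)$ and $\mu\in\mustout(s\mustafter\sigma)$, with witness $s'\in s\mustafter\sigma$. As $\sigma\in\muststraces(i)$ we have $i\mustafter\sigma\neq\emptyset$, so Lemma~\ref{lemma:refinement-outstates}(2) supplies an $i'\in i\mustafter\sigma$ with $i'\miaref s'$. If $\mu\in O$, then $s'\oversetmust{\mu}$, and clause~1 of Def.~\ref{def:mia_refinement} applied to $i'\miaref s'$ forces $i'\oversetmust{\mu}$, whence $\mu\in\mustout(i\mustafter\sigma)$. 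If $\mu=\delta_\Box$, then $\mayinit(s')\subseteq I$; clause~2 of Def.~\ref{def:mia_refinement} requires every may-output of $i'$ to be matched by one of $s'$, of which there are none, so $\mayinit(i')\subseteq I$, \ie $\delta_\Box(i')$ and again $\mu\in\mustout(i\mustafter\sigma)$. In either case condition~2 holds, and together with the may-part this gives $i\miocomia s$.

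I expect the main obstacle to be the correct handling of the quiescence symbols across the two representations. On the may-side one must be careful that the plain LTS-suspension semantics of $\famlts{i}$ and $\famlts{s}$ really coincide with the modal $\mayout$ and $\maystraces$ notions (the identities borrowed from the Soundness proof). On the must-side the delicate step is verifying that MIA-refinement transports must-quiescence \emph{downward}, from $s'$ to the refining $i'$, which is exactly what clause~2 of Def.~\ref{def:mia_refinement} guarantees. Everything else is a routine chase through Lemma~\ref{lemma:refinement-outstates} and the two refinement clauses.
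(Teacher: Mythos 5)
Your proposal is correct (at the paper's own level of rigor) and arrives at the same essential decomposition as the paper's proof --- the may-condition of $\miocomia$ is discharged from the $\mathioco$-hypothesis, the must-condition purely from $i\miaref s$ --- but it gets there by a genuinely different route. The paper argues by contradiction: it negates $i\miocomia s$, splits into a may-violation and a must-violation, and in the may-case builds a \emph{trace-specific} variant $i_\sigma$ (promoting the may-transitions along $\sigma$ and the outputs reachable after it to must-transitions) which then fails $\mathioco$ against $\famlts{s}$. You instead argue directly and instantiate the hypothesis once, at the single uniform variant $\famlts{i}$; this buys a cleaner argument that avoids the paper's rather delicate ``largest MIA such that\dots'' construction, at the price of leaning on the identity $\out(\famlts{i}\after\sigma)=\mayout(i\mayafter\sigma)$. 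Be aware that this identity is exact only for proper outputs: a state with an optional but no mandatory output is may-quiescent in $i$ yet not quiescent in $\famlts{i}$, so the $\delta_\Diamond$-part of $\mayout(i\mayafter\sigma)$ need not show up in $\out(\famlts{i}\after\sigma)$; likewise $\maystraces(s)$ and $\straces(\famlts{s})$ can differ on traces containing quiescence symbols. However, the paper's own proof has exactly the same blind spot (its claimed equality $\mustout(i'\mustafter\sigma)=\mayout(i\mayafter\sigma)$ for $i_\sigma$ fails on $\delta_\Diamond$ for the same reason, as does the identity it asserts for $\famlts{s}$ in the soundness proof), so this is a shared imprecision rather than a gap relative to the paper --- and you are right to single it out as the delicate point. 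On the must-side your argument coincides with the paper's Case~(2), using Lemma~\ref{lemma:refinement-outstates}(2) together with clause~1 of Def.~\ref{def:mia_refinement}, but run directly rather than by contradiction; moreover, you additionally handle the quiescence element $\mu=\delta_\Box$ via clause~2 of Def.~\ref{def:mia_refinement}, a case the paper's analysis (restricted to $a\in O$) silently omits, so your treatment of that half is actually the more complete one.
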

\begin{proof}
  Assume $i\miocomia s$ does not hold, but for all $i'\confref i$ it holds that $i'\ioco \famlts{s}$.
  This means that (1) there exists a $\sigma\in\maystraces(s)$ such that $\mayout(i\mayafter\sigma)\not\subseteq\mayout(s\mayafter \sigma)$ or (2) there exists a $\sigma\in\muststraces(i)$ so that $\mustout(s\mustafter\sigma)\not\subseteq\mustout(i\mustafter\sigma)$.
  \begin{description}
    \item[Case (1):] It holds that $\sigma\in\maystraces(i)$.
    We construct a variant of $i$ respecting $\sigma$ as follows.
    Let $i'\miaref i$ the largest MIA (w.\,r.\,t.\ \miaref) such that whenever $i\Oversetmay{\sigma} q \oversetmay{a} q'$ then $i'\Oversetmust{\sigma} q\oversetmust{a} q'$.
    Hence, $\mustout(i'\mustafter\sigma) = \mayout(i\mayafter\sigma)$.
    $i_\sigma$ is the variant of $i$ that includes all must transitions of $i'$.
    But then $\out(i_\sigma\after\sigma) = \mustout(i'\mustafter\sigma)\not\subseteq\mustout(s\mustafter\sigma)$ and thus $i_\sigma\mathioco \famlts{s}$ does not hold, which contradicts the assumption that all variants of $i$ conform to $\famlts{s}$.
    \item[Case (2):] It holds that $\sigma\in\muststraces(s)$.
    As $\mustout(s\mustafter\sigma)\not\subseteq\mustout(i\mustafter\sigma)$, there is an $s'\in s\mustafter\sigma$ such that $s'\oversetmust{a}$ for some $a\in O$, but for all $i'\in i\mustafter\sigma$, it holds that $i'\not\oversetmust{a}$.
    But this contradicts the assumption that $i\miaref s$, as by Lemma~\ref{lemma:refinement-outstates} there is an $i'\in i\mustafter\sigma$ and $i'\miaref s'$.
  \end{description}
  Thus, $i'\mathioco \famlts{s}$ for all $i'\confref i$ implies that $i\miocomia s$.
\end{proof}
Thus, our \mioco framework is sound, and complete in case the implementation is a refined version of the specification.
When dropping the requirement of $i\miaref s$, it is possible to show that if there is a variant $i'$ of $i$ such that $i' \mathioco \famlts{s}$ does not hold, then $i\miocomia s$ does not hold, either.
\begin{theorem}[Completeness II]\label{theorem:contra-compl}
	Let $i, s$ be MIAs such that $i$ is input-enabled.
	If there is an $i'\confref i$ such that $i'\ioco \famlts{s}$ does not hold, then $i\miocomia s$ does not hold.
\end{theorem}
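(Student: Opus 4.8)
The plan is to prove this by contraposition, observing that the statement is the reverse reading of what the proof of Theorem~\ref{theorem:soundness} already establishes. Indeed, the soundness argument does not merely produce \emph{some} witness $s'\confref s$; it fixes $s'=\famlts{s}$ uniformly and shows that $i\miocomia s$ forces $i'\mathioco\famlts{s}$ for \emph{every} variant $i'\confref i$. Negating this implication on both sides yields exactly $\bigl(\exists i'\confref i:\lnot(i'\mathioco\famlts{s})\bigr)\Rightarrow\lnot(i\miocomia s)$, which is the claim. In particular, only the may-direction (clause~(1) of Def.~\ref{def:mioco}) is relevant, so the extra refinement hypothesis $i\miaref s$ demanded by Theorem~\ref{theorem:compl} is not needed here.

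For a self-contained direct argument, I would assume a variant $i'\confref i$ for which $i'\mathioco\famlts{s}$ fails, and extract the witnessing suspension trace: there is $\sigma\in\straces(\famlts{s})$ with $\out(i'\after\sigma)\not\subseteq\out(\famlts{s}\after\sigma)$. The next step is to transport this failure into the modal world. Since $\famlts{s}\confref s$, Lemma~\ref{lemma:lts_to_modal} gives $\sigma\in\maystraces(s)$, and by the defining construction of $\famlts{s}$ (its transition relation is exactly $\mayarrow$ of $s$) one has $\out(\famlts{s}\after\sigma)=\mayout(s\mayafter\sigma)$; both identities are precisely the ones already invoked in the soundness proof.

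The remaining step bounds the left-hand side from above using the variant relation: because $i'\confref i$ restricts every transition of $i'$ to a may transition of $i$, Lemma~\ref{lemma:lts_to_modal} again yields $\out(i'\after\sigma)\subseteq\mayout(i\mayafter\sigma)$. Combining $\out(i'\after\sigma)\subseteq\mayout(i\mayafter\sigma)$ with $\out(i'\after\sigma)\not\subseteq\out(\famlts{s}\after\sigma)=\mayout(s\mayafter\sigma)$ gives $\mayout(i\mayafter\sigma)\not\subseteq\mayout(s\mayafter\sigma)$ for a trace $\sigma\in\maystraces(s)$, which directly violates clause~(1) of $i\miocomia s$. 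Hence $i\miocomia s$ cannot hold, completing the proof.

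I expect the only delicate point to be the bookkeeping around quiescence: one must check that the plain-trace statement of Lemma~\ref{lemma:lts_to_modal} and the identity $\out(\famlts{s}\after\sigma)=\mayout(s\mayafter\sigma)$ survive the passage to \emph{suspension} traces, i.e.\ that the $\delta$-observations of the LTS $\famlts{s}$ correspond to the may-quiescence observations $\delta_\Diamond$ of $s$. Since $\famlts{s}$ carries exactly the may transitions of $s$, a state is quiescent in $\famlts{s}$ iff it is may-quiescent in $s$, so this matching is immediate; in any case the soundness proof already relies on exactly these correspondences, so I would simply reuse them.
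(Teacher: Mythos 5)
Your proposal is correct and follows essentially the same route as the paper's own proof: transport the failing suspension trace $\sigma$ into the modal setting via Lemma~\ref{lemma:lts_to_modal}, use the construction of $\famlts{s}$ to identify $\out(\famlts{s}\after\sigma)$ with $\mayout(s\mayafter\sigma)$, and conclude that clause~(1) of Def.~\ref{def:mioco} is violated. If anything, your version is slightly more explicit than the paper's, which states only $\mayout(i\mayafter\sigma)\neq\emptyset$ where your inclusion $\out(i'\after\sigma)\subseteq\mayout(i\mayafter\sigma)$ is the step actually needed to lift the violation from $i'$ to $i$; your observation that the theorem is the contrapositive of what the soundness proof establishes is also accurate.
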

\begin{proof}
  Let $i'\confref i$ be an IOLTS such that $i'\mathioco \famlts{s}$ does not hold, \ie there exists a $\sigma\in\straces(\famlts{s})$ so that $\out(i'\after \sigma)\not\subseteq\out(\famlts{s}\after\sigma)$.
  By Lemma~\ref{lemma:lts_to_modal}, $\sigma\in\maystraces(s)$ and also $\mayout(i\mayafter\sigma) \neq\emptyset$.
  From the construction of $\famlts{s}$, $\mayout(i\mayafter\sigma)\not\subseteq\mayout(s\mayafter\sigma)$ implying $i\miocomia s$ does not hold.
\end{proof}
%
%
Theorem~\ref{theorem:soundness} ensures that whenever $\miocomia$ is established between product line implementation $i$ and product line specification $s$, then each variant $i'$ derived from $i$ I/O-conforms to $\famlts{s}$.
Correspondingly, Theorem~\ref{theorem:compl} and Theorem~\ref{theorem:contra-compl} state that whenever $\miocomia$ cannot be established between $i$ and $s$, then there is at least one variant $i'$ of $i$ not I/O-conforming to $\famlts{s}$.
According to Theorem~\ref{theorem:compl}, this is only ensured if $i\miaref s$ holds.
Summarizing, our $\miocomia$ reflects the essence of family-based product line analysis~\cite{Thum2014} by means of I/O-conformance testing.

\section{Conclusion and Future Work}\label{sec:discussion}
In this paper, we proposed a family-based I/O-conformance testing theory for product lines based on Modal Interface Automata, which is sound and complete w.\,r.\,t.\ variant-by-variant I/O-conformance testing based on IOLTS.
As future work, we plan to exploit the MIA framework for its compositionality properties to obtain criteria for compositional I/O-conformance testing of product lines.
Therefore, dealing with internal actions, excluded from this papers' considerations, is inevitable.
However, the results we obtained throughout this paper canonically extend to the case of MIAs with internal actions.
This way, we obtain a similar variability concept as Larsen et al.~\cite{larsen:modalioautomata}, which is based on modal refinement and the ability of composition with an environmental specification validating implementation variants.
Furthermore, we plan to implement our theory, based on a $\miocomia$-extended version of {\scshape JTorX}~\cite{Belinfante2010} to provide an applicable tool for efficient product line I/O conformance testing.
    \bibliographystyle{eptcs}
    \bibliography{content/sources.eptcs}

\end{document}